\newif\ifappendix
    \newcommand{\appendixref}[2][]%
    \newcommand{\appendixref}[2][]%
\def\BibTeX{{\rm B\kern-.05em{\sc i\kern-.025em b}\kern-.08em
    T\kern-.1667em\lower.7ex\hbox{E}\kern-.125emX}}
\newtheorem{lemma}{lemma}
\newtheorem{claim}{claim}
\begin{document}

\title{On Quorum Sizes in DAG-Based BFT Protocols
}

\author{
    \IEEEauthorblockN{Razya Ladelsky}
%\IEEEauthorrefmark{1} \hspace{8em}
    \IEEEauthorblockA{
        Technion\\
        %\IEEEauthorrefmark{1} \orcidlinkc{0009-0007-2861-1865}  \hspace{8em} \IEEEauthorrefmark{2} \orcidlinkc{0000-0001-6460-9665}\\
        %\IEEEauthorrefmark{1}
        \texttt{ladelskyr@cs.technion.ac.il}
    }
    \and
    \IEEEauthorblockN{Roy Friedman}
    %\IEEEauthorrefmark{2}}
    \IEEEauthorblockA{
        Technion\\
        \texttt{roy@cs.technion.ac.il}
    }
}

\maketitle
\thispagestyle{plain}
\pagestyle{plain}
\begin{abstract}
Several prominent DAG-based blockchain protocols, such as DAG-Rider, Tusk, and Bullshark, completely separate between equivocation elimination and committing; equivocation is handled through the use of a reliable Byzantine broadcast black-box protocol, while committing is handled by an independent DAG-based protocol.
With such an architecture, a natural question that we study in this paper is whether the DAG protocol would work when the number of nodes (or validators) is only $2f+1$ (when equivocation is eliminated), and whether there are benefits in working with larger number of nodes, i.e., a total of $kf+1$ nodes for $k > 3$.

We find that while DAG-Rider's correctness is maintained with $2f+1$ nodes, the asynchronous versions of both Tusk and Bullshark inherently depends on having $3f+1$ nodes, regardless of equivocation.
We also explore the impact of having larger number of nodes on the expected termination time of these three~protocols.
\end{abstract}

\begin{IEEEkeywords}
DAG protocols, Blockchain, BFT, SMR
\end{IEEEkeywords}

\section{Introduction}
\label{sec:intro}
Many Byzantine fault tolerant (BFT) consensus, state machine replication (SMR), and blockchain protocols are structured in a manner that repeatedly first eliminates potential equivocation of proposed values, and only then tries to agree on whether to decide on such a value~\cite{Tendermint,PBFT,Hotstuff}.
This is even more profound in several prominent DAG-based blockchain protocols, such as DAG-Rider~\cite{Dag-rider}, Tusk~\cite{Tusk}, and Bullshark~\cite{Bullshark}.
In fact, in these DAG-based protocols, values (or vertices in the DAG) are broadcast using a Byzantine reliable broadcast sub-protocol, which is treated as a black-box.
The received values (vertices) are then independently used by each node to form a local DAG structure, based on which each node attempts to decide on an increasingly longer chain of values (vertices).
Decided vertices are appended to the local chain, and may never be modified or removed.
Further, the local chains of all correct validators (nodes) must have a continuously growing common prefix.

All DAG protocols we are aware of target the optimal resiliency threshold of $n=3f+1$. Further, even their DAG construction, maintenance, and commit rules are based on having up to $3f+1$ vertices in each round of the DAG, with quorums of at least $2f+1$ vertices.
Yet, the clear separation between equivocation elimination and DAG maintenance in DAG-Rider~\cite{Dag-rider}, Tusk~\cite{Tusk}, and Bullshark~\cite{Bullshark} raises the question of whether this is really necessary.
That is, why not adopt DAGs of at most $2f+1$ vertices per round, with commit rules based on quorums of $f+1$ vertices?
After all, working with a smaller DAG implies less space, lower computational overheads, and lower communication overhead.

Alternatively, we may ask what is the impact of the overall ratio between $n$ and $f$ on other aspects of the protocol, and in particular on its complexity and expected latency.
For example, it is well known that for non-DAG based protocols, increasing the ratio between $n$ and $f$ may result in simpler and more efficient protocols~\cite{FMR05}, and may lead to faster expected termination~\cite{Zyzzyva,KTZ21,Fab,Bosco}.
Also, we point to the recent debate in the community regarding the best quorum sizes to be used in non-DAG based protocols when equipped with judicious use of a trusted execution environment (TEE)~\cite{Letittee}.
Such a TEE provides non-equivocation, but still allows Byzantine validators to try and interfere with the protocol's working in other ways~\cite{dissecting-bft}.
Arguments in favor and against using $n=2f+1$ vs.
$n=3f+1$ have been explored in~\cite{vivisectingdissection,dissecting-bft}.

The specific case of DAG-Rider has been recently partially explored in~\cite{Letittee}.
Specifically, it was shown in~\cite{Letittee} that DAG-Rider can be trivially rewritten to work with $n=2f+1$ and have its commit rule changed to be based on only $f+1$ vertices and the resulting algorithm, nicknamed TEE-rider, maintains liveness and safety.
However, the work in~\cite{Letittee} did not analyze the impact on the expected termination time.

In this work, we take a step towards filling this gap, by analyzing the impact of varying the number of validators participating in the DAG from $n=2f+1$, $n=3f+1$, and the general case of $n=kf+1, k\geq 3$ on DAG-Rider~\cite{Dag-rider}, Tusk~\cite{Tusk}, and Bullshark~\cite{Bullshark}, assuming equivocation has been eliminated separately.
We explore the impact on correctness and expected termination for these three protocols, and draw some general insights and~observations.

As a side effect, we also show a very simple TEE-less equivocation elimination mechanism requiring $n \geq 3f+1$, which enables the rest of the DAG protocol to choose whether it wishes to work with $n=2f+1$ or more, independently of the equivocation handling mechanism.

\begin{table*}[htbp]
\caption{Summary of our findings regarding safety and liveness (expected termination time) for various $k$ values}
\begin{center}
\begin{tabular}{ c|c|c|c|c } 
\hline
Protocol/safety, liveness  &  $k=2$ &  $k=3$ &  $k>3$ & notes\\  \hline
 DAG-Rider & safe, 2 waves  & safe, 1.5 waves & safe, $\frac{k}{k-1}$ waves  \\ \hline
 Tusk & safe, not live & safe, 3 waves & safe, $\frac{k}{k-2}$ waves & pipelined \\ \hline
 Tusk(Random) & safe, small prob & safe, $\frac{4}{3}$ waves  & safe, 1.06 waves  \\ \hline
 Bullshark Asych & not safe & safe, 1.5 waves  & safe, $\frac{k}{k-1}$ waves  \\ \hline
 Bullshark Partial Synch & safe, live & safe, live & safe, live  \\ \hline
\end{tabular}
\label{tab:Summary}
\end{center}
\end{table*}

\paragraph*{Our Contributions}
%\noindent\textbf{Our Contributions:}
\begin{enumerate}
    \item We performed an analysis of DAG-Rider~\cite{Dag-rider}, Tusk~\cite{Tusk}, and Bullshark~\cite{Bullshark} with $n=2f+1$ validators, focusing on safety, liveness, and expected termination where relevant. This study investigates the feasibility of adapting these protocols to operate with $2f+1$ validators, rather than the current $3f+1$ requirement, assuming equivocation can be eliminated. 
    Our findings are summarized in Table~\ref{tab:Summary}. Specifically, whenever the number of validators is $2f+1$ (and equivocation has been eliminated):
    \begin{itemize}
        \item Echoing the findings of~\cite{Letittee}, we show that DAG-Rider can be adapted to work correctly.
        \item We show that while Tusk can maintain safety, it does not ensure liveness.
        \item The asynchronous version of Bullshark does not even guarantee safety.
        \item The partially synchronous version of Bullshark provides both safety and liveness.
    \end{itemize}
    \item We demonstrate that increasing the number of validators beyond $3f+1$ positively impacts expected termination, but with diminishing returns.  
    \item As a minor side contribution, we introduce a new TEE-less equivocation elimination technique, enabling the protocol to construct and order the DAG with the participation of only $2f+1$ validators
    \ifdefined\ICBCCR
    (deferred to~\cite{ourfullpaper}).
    \else
    (deferred to~\appendixref[, section~6]{app:no-equivocation}).
    \fi
\end{enumerate}

\section{Background}
\label{sec:background}

\subsection{Model and Building Blocks}
We assume a typical asynchronous distributed system prone to Byzantine failures.
That is, the system comprises a set of $n$ validators $\{p_1,p_2,...,p_n\}$ where up to $f$ of them may behave arbitrarily, i.e., be \emph{Byzantine}.
The rest of the validators are \emph{honest}, and are assumed to follow the protocol.
We assume reliable links between honest validators, ensuring that every message between them will eventually be delivered, only sent messages may be delivered, and the recipient can verify the sender's identity (no impersonation/Sybil attacks).
DAG-Rider, Tusk, and asynchronous Bullshark protocols assume the communication is completely asynchronous, such that there is no bound on message delay. 
Partially synchronous Bullshark, on the other hand, assumes an asynchronous execution up to an unknown Global Stabilization Time (GST) after which the messages sent between honest validators arrive within a maximal known $\Delta$ delay.
%mention powerful adversary

\paragraph*{Reliable Broadcast}
All the protocols we consider use reliable broadcast as a building block.
Reliable broadcast ensures the following properties:

\begin{LaTeXdescription}
\item[Agreement:] If an honest validator delivers a message, then all other honest validators eventually deliver the same message with probability 1.
\item[Integrity:] A message is delivered by each honest validator at most once.
\item[Validity:] If an honest validator broadcasts a message, then all honest validators eventually deliver that message with probability 1.
\end{LaTeXdescription}

\paragraph*{Shared Coin}
Most asynchronous DAG based protocols rely on a shared coin abstraction~\cite{NLS01,CKS00,LJY14,LM18,Sh00}.
Translated into our setting, a shared coin exposes a \textit{id}~:=~\texttt{coin\_toss}($w$) abstraction.
It ensures that all invocations of \texttt{coin\_toss}($w$) return the same \textit{id} value, which is the identifier of one of the validators in the system.
It satisfies the following properties:
\begin{LaTeXdescription}
    \item[Agreement:] If two correct validators call \texttt{coin\_toss}($w$), then the identifiers they return are the same.
    \item[Termination:] If at least $f+1$ validators call \texttt{coin\_toss}($w$), then every call to \texttt{coin\_toss}($w$) eventually returns an identifier.
    \item[Unpredictability:] As long as fewer than $f+1$ validators call \texttt{coin\_toss}($w$), the probability $pr$ that the adversary can guess the return value is $pr \leq 1/n + \epsilon$, for some negligible probablilty $\epsilon$.
    \item[Fairness:] The probability that a call to \texttt{coin\_toss}($w$) would return any given validator's identifier is $1/n$.
\end{LaTeXdescription}

%TBD (this can be a rather short discussion, as this is just a standard tool we are using.)

\paragraph*{Byzantine Atomic Broadcast (BAB)}
BAB satisfies all the properties of Reliable Broadcast, along with total order, ensuring they deliver messages in the same order. 
Formally:
\begin{LaTeXdescription}
\item[Total Order:]
If an honest validator delivers a message $m_1$ before another message $m_2$, no other honest validator delivers $m_2$ before $m_1$.  
\end{LaTeXdescription}
Dag-Rider, Tusk, and Bullshark aim to solve the Byzantine Atomic Broadcast (BAB) problem and demonstrate that each of the four properties is satisfied.
DAG-Rider and Bullshark use the notion of weak links to ensure validity, while Tusk does not, and ensures \emph{transaction-level fairness} rather than \emph{block-level fairness}.
We do not mention these mechanisms for addressing validity in this work, as our proposed changes are orthogonal.
Similar to other works, we find it constructive to categorize the protocols using the notions of safety and liveness, which in turn provide all the above properties.
\begin{LaTeXdescription}
    \item[Safety:]
    Total order is preserved.
    \item[Liveness:]
    Progress is ensured, i.e., messages are eventually committed with probability 1.
\end{LaTeXdescription}

\subsection{DAG Construction and Ordering }
The protocols we are interested in operate in a similar manner:
Validators repeatedly reliably broadcast their proposals, and build the next layer (or round) of their local DAGs according to the proposals they have received. 
Each validator then inspects its own view of the DAG and orders it, using no extra communication.

\subsubsection{DAG Construction}
Each vertex in the DAG contains the message reliably broadcast by a certain validator, including references to previously seen vertices by that validator.
These references serve as the edges of the DAG, as illustrated in Figure~\ref{fig:DAG_illustration}.
Different correct validators may see different local DAGs at any given point in time, but reliable broadcast prevents equivocation and guarantees that all correct validators eventually deliver the same vertices. 
We denote $DAG_i$ as the DAG that validator $p_i$ observes.

Each vertex contains a single round number $r$, the source that broadcast it, and a set of $n-f$ edges to vertices belonging to the previous round $r-1$. 
Each validator may observe in its local DAG up to $n$ such vertices for a given round $r$, each associated to a different validator source (including itself).
The reliable broadcast ensures that a validator cannot generate two versions of the same vertex (for the same round).
We denote the set of vertices corresponding to round $r$ in validator $p_i$'s DAG as $DAG_i[r]$.

Whenever there is a sequence of contiguous edges from vertex $u$ to vertex $v$, this is denoted $path(u,v)$. 
The causal history of vertex $v$ in a $DAG_i$ is the set $\{u \in DAG_i| path(v,u)\}$.
When a validator (reliably) delivers a vertex, it adds it into its DAG (assuming the vertex is valid), under the condition that all the vertices pointed to by its edges are already in the DAG. 
Thus, when a vertex is added to the local DAG, it is guaranteed that all its causal history is already in the DAG. 
Note that since the use of reliable broadcast eliminates equivocations, any two validators that add a vertex $v$ broadcast~by validator $p_i$ for round $r$ to their respective DAGs have the exact same $v$, and they also observe the same causal history for~$v$.

As soon as a validator adds $n-f$ vertices of round $r$ to its DAG, it creates and broadcasts its own vertex for round $r+1$, with references to all the vertices of round $r$ it has already seen.
The rule for advancing rounds in Bullshark is a little more subtle.
That is, in Bullshark, there are timeouts that enable waiting for a short additional amount of time, even after receiving $n-f$ messages, before advancing to the next round.
This is done to improve the protocol's termination time in certain favorable scenarios.

%illustration of a DAG
Figure \ref{fig:DAG_illustration} illustrates an example for a DAG observed by validator $p_1$, where $f=2$, and the total number of validators is $n=2f+1$.

\begin{figure}[htbp]
\centerline{\includegraphics[width=0.6\linewidth]{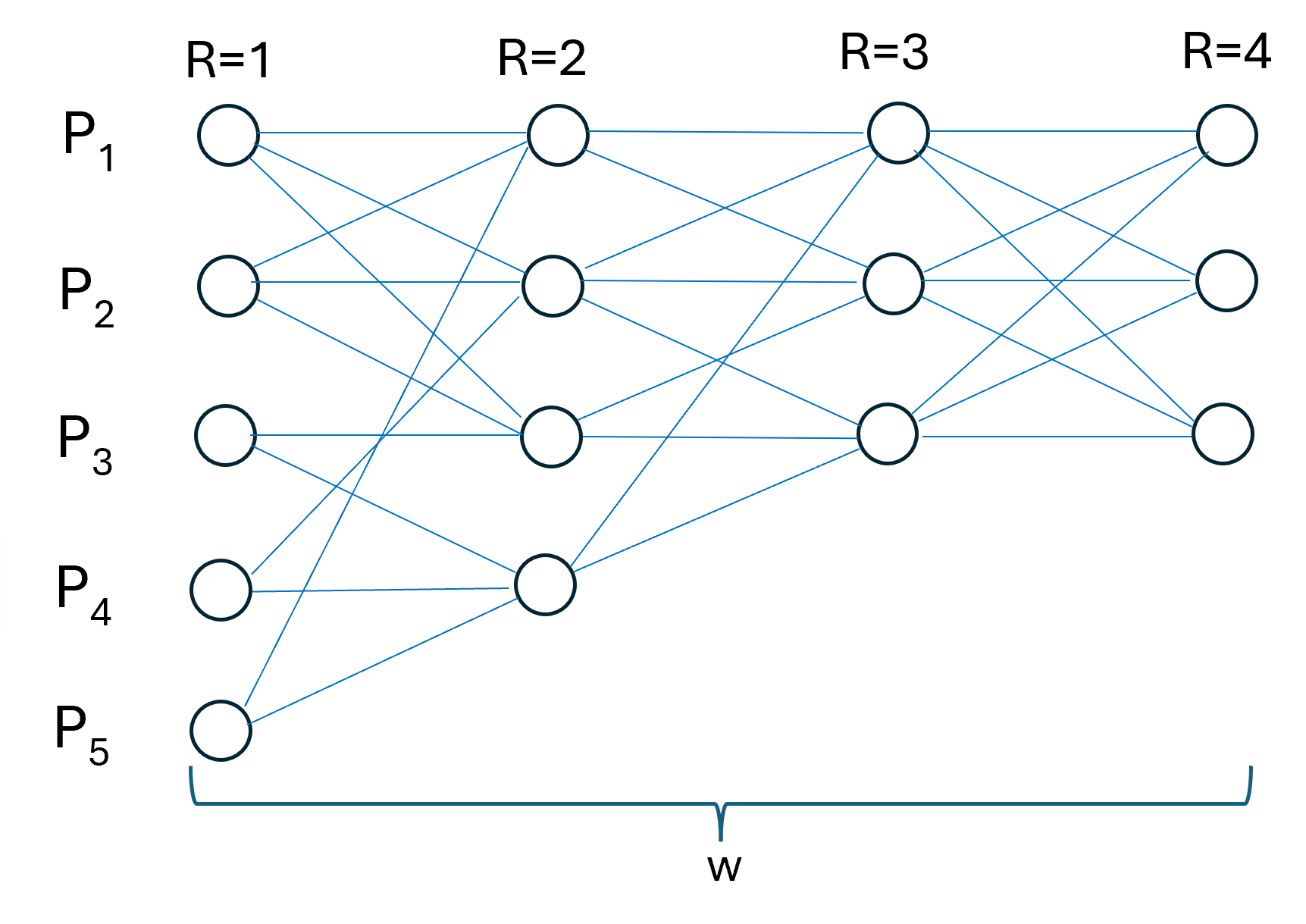}}
\caption{DAG illustration for validator $p_1$. $f=2$, and the total number of validators is $n=2f+1$. 
Each row corresponds to a validator and a column represents the round number.  
Each round contains at least $n-f = 3$ and up to $n=5$ vertices, where each vertex points to at least $n-f = 3$ vertices from the previous round. wave w consists of rounds 1-4.
}
\label{fig:DAG_illustration}
\end{figure}

\subsubsection{Ordering the DAG}

%Each protocol applies a different approach to order the vertices of the DAG.

All three protocols partition rounds into waves, where each wave consists of a constant number of rounds. Each protocol aims to finalize a decision at wave boundaries. In Figure \ref{fig:DAG_illustration}, wave 
$w$ consists of four rounds, as is the case with DAG-Rider, for instance.
We annotate the $j^\mathrm{th}$ round of wave $w$ of validator $p_i$ as 
$DAG_i[\textit{round}(w,j)]$.
At least one leader is chosen for the wave, and a commit rule is applied to decide whether a certain leader's proposal can be committed.
The number of rounds that define a wave and the commit rule applied within it are specific to each algorithm.
%choosing a leader
As established by the famous FLP result \cite{FLP}, Byzantine Atomic Broadcast cannot be deterministically solved in an asynchronous setting. To address this, a global perfect coin is used to introduce randomness: validators choose the leader of the wave using the coin at the end of the wave.
The vertex produced by that leader in round(w,1)
is the candidate to be committed.
We sometimes refer to that vertex as the leader when it is clear from context.
Since the random leader is elected at the end of the wave, liveness is guaranteed: by the time the leader is chosen, it is too late for an adversary to manipulate the network and prevent the commit as the DAGs for the respective wave are already formed.
Bullshark introduces predefined leaders that are chosen deterministically, in addition to the randomly chosen leader. 
Once the leader is committed, the vertices in its causal history are deterministically ordered.
Note that since lack of equivocation is guaranteed, if all validators commit the same leader vertex, they all observe the same causal history, and thereby commit all the history in the same order.
Hence, the main challenge is in preserving total order on the commit of the leaders. 
In particular, since validators may have different views of the DAG at any given time, the protocol should guarantee that if one honest validator commits a leader, then all honest validators will eventually commit it as well.

DAG protocols typically enforce both direct and indirect commit rules, reflecting the principle of quorum intersection. If validator $p_i$ identifies a pattern in wave $w$ and decides to directly commit the corresponding leader, quorum intersection guarantees the presence of another pattern in other DAGs, enabling the other validators to indirectly commit the same leader.
The direct commit rule evaluates whether the leader has received a sufficient number of \emph{votes} within a wave.
Intuitively, a vote signifies that a validator (voter) has observed the leader.
The primary distinction between protocols lies in how they define votes and establish commit rules. For instance, in DAG-Rider, a leader 
$v$ is committed if there are at least 
$2f+1$ vertices in the last round of the wave that have paths to $v$.
Once a leader is directly committed in $DAG_i$, the DAG is traversed backwards through previous rounds/waves to check if any past leaders might have been committed by another validator. An indirect commit rule is then applied to determine this, and if so, the leaders are ordered in ascending order by their rounds, with the earlier leader ordered first.
%\todo[inline]{I think adding an example for indirect commit here might make the background clearer, however, we have space limits. please tell me what you think... I would add "In DAG-Rider, for example, the indirect commit rule is fulfilled if there's a path from the committed leader to previous leaders."}
In DAG-Rider, for example, the indirect commit rule is fulfilled if there is a path from the committed leader to previous leaders.
The safety stems from the fact that fulfilling the  direct commit rule in $DAG_i$ will imply that the indirect rule will be fulfilled in $DAG_j$, for any pair of honest validators $p_i$ and $p_j$.
%\todo[inline]{new subsubsection}
\subsubsection{Equivocation Elimination}
To enable the aforementioned DAG-based protocols to operate with $n=2f+1$ validators, we assume the existence of a mechanism that prevents equivocation and ensures reliable broadcast. Equivocation can be addressed using a TEE-based approach or, alternatively, a TEE-less method that employs witness validators, whose sole purpose is to ensure reliable broadcast. In both cases, once reliable broadcast is ensured, the rest of the protocol can proceed with 
$n=2f+1$ validators, if desired. 
Both approaches are described in details
\ifdefined\ICBCCR
in~\cite{ourfullpaper}.
\else
in \appendixref[, section~5]{app:no-equivocation}.
\fi

\section {DAG-Rider}
\label{sec:dag-rider}
DAG-Rider was proposed using $3f+1$ validators.
We outline the key aspects of the protocol in a general manner, applicable to a varying number of validators, where the total number of validators is given by 
$n=kf+1$, for any redundancy factor $k\geq2$ with $f$ representing the maximal number of tolerated Byzantine nodes.
Every validator $p_i$ builds and orders its DAG, $DAG_i$, according to the following rules:
\begin{enumerate}
    \item Each round consists of at least $(k-1)f+1$ vertices.
    \item Each vertex points to $(k-1)f+1$ vertices from the previous round.
     \item Each wave is constructed from 4 rounds.
     \item At the end of a wave $w$, a validator $p_i$ is chosen using the shared coin flip abstraction.
     The leader is $p_i$'s vertex in the first round of $w$, and it is the candidate for commit.
     \item Direct commit: if there are $(k-1)f+1$ vertices at round $4$ of $w$ that have paths to the leader $v$, then $v$ is committed:
$|  { v' \in DAG_i[\textit{round}(w,4)] \colon
\textit{path}(v',v) }| \geq
(k-1)f+1
$.
\item Indirect commit: 
    When the leader $v$ of wave $w$ is directly committed, recursively iterate from wave $w-1$ to the last wave for which a leader was committed, and apply the following logic for each wave $w'$, s.t. $w'<w$ :
    if there is a path from $v$ to $v'$ such that $v'$ is an uncommitted leader vertex in a wave $w'$, then $v'$ is also committed.
    The leaders committed this way are ordered in ascending order according to their round numbers such that $v'$ is ordered before $v$.
\end{enumerate}

\subsection{Safety}
Lemma~1 in~\cite{Dag-rider} is the main lemma harnessed to show how total order is satisfied. 
The LetItTee work~\cite{Letittee} augments this lemma to the $2f+1$ case.
We generalize the lemma for any $n=kf+1$ validators, $k\geq2$:
\begin{lemma}
\label{lemma:strongPathLeaders}
If a correct validator $p_i$ commits the wave leader $v$ of a wave $w$, then for any validator $p_j$, any leader vertex $v'$ of a wave $w'>w$ such that $v' \in DAG_j[\textit{round}(w',1)]$ will have a path to $v$.
\end{lemma}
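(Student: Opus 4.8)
The plan is to prove Lemma~\ref{lemma:strongPathLeaders} by induction on the wave number $w'$, following the classical DAG-Rider argument but tracking the dependence on the redundancy factor $k$. The base case is $w' = w+1$; the inductive step assumes the claim for all waves strictly between $w$ and $w'$ and derives it for $w'$. Throughout, the key quantitative fact I would isolate first is a quorum-intersection bound: in a round with at most $n = kf+1$ vertices, any two sets of vertices of size $(k-1)f+1$ each must intersect in at least $(k-2)f+1$ vertices, and in particular (for $k \geq 2$) in at least one \emph{honest} vertex, since the number of Byzantine sources is at most $f$. This is the engine that converts ``$p_i$ saw a pattern'' into ``every later leader vertex is forced to route through $v$.''

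For the base case, suppose $p_i$ directly commits $v$ in wave $w$. Then there is a set $Q$ of $(k-1)f+1$ vertices in $DAG_i[\textit{round}(w,4)]$, each with a path to $v$. Now take any leader vertex $v'$ of wave $w' = w+1$, i.e.\ $v' \in DAG_j[\textit{round}(w+1,1)]$ for some $p_j$. By the DAG construction rules, $v'$ points to $(k-1)f+1$ vertices of round $\textit{round}(w,4)$; call this set $Q'$. Since $|Q| + |Q'| = 2((k-1)f+1) > kf+1 = n$ for $k \geq 2$ (the inequality is $(k-2)f + 1 > 0$), the sets $Q$ and $Q'$ intersect, and moreover their intersection must contain an honest vertex because at most $f$ of the $n$ round-$\textit{round}(w,4)$ vertices are Byzantine and $|Q \cap Q'| \geq (k-2)f + 2 > f$ (here I would double-check the exact slack; at minimum one needs the intersection to exceed $f$, which follows from $2((k-1)f+1) - (kf+1) = (k-2)f+1$, and for $k\ge 2$ this is $\ge 1$, so I may need the honest-vertex argument via the standard ``$(k-1)f+1 \ge f+1$ so any quorum contains an honest vertex'' refinement rather than intersection size alone). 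An honest vertex $u \in Q \cap Q'$ gives a path $v' \to u$ (since $v'$ points to $u$) and a path $u \to v$ (since $u \in Q$), hence $\textit{path}(v', v)$.

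For the inductive step with $w' > w+1$, let $v'$ be a leader vertex of wave $w'$ in some $DAG_j[\textit{round}(w',1)]$. The round-advancement rule forces a chain: $v'$ reaches $(k-1)f+1$ vertices of round $\textit{round}(w'-1,4)$, each of which reaches $(k-1)f+1$ vertices of the previous round, and so on down through the four rounds of wave $w'-1$ until we reach $\textit{round}(w'-1,1)$, where some honest vertex in the reachable set is a leader candidate. The standard move is: the set of round-$\textit{round}(w'-1,1)$ vertices reachable from $v'$ has size at least $(k-1)f+1$ (by iterating quorum intersection down the wave), hence contains an honest vertex; but a cleaner route is to show $v'$ has a path to the wave-$(w'-1)$ leader vertex $v''$ whenever $v''$ exists in some honest DAG, and then invoke the induction hypothesis on $w'-1$ to get $\textit{path}(v'', v)$, concatenating to $\textit{path}(v', v)$. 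I would structure it this way: first establish that $v'$ reaches \emph{every} honest vertex of round $\textit{round}(w'-1,1)$ that lies in the causal history along these quorum chains is not quite true, so instead I reach at least one leader-round vertex of wave $w'-1$ that some honest party committed or could commit — the precise statement to use is that $v'$ has a path to the wave $w'-1$ leader $v''$, which holds because the $(k-1)f+1$ round-$4$ vertices of wave $w'-1$ reachable from $v'$ intersect (by the quorum bound) the $(k-1)f+1$ vertices that voted for $v''$ in whichever honest DAG $v''$ was relevant; but since the leader identity is globally agreed via the shared coin, ``the wave $w'-1$ leader'' is well-defined, and I only need that \emph{if $v''$ exists as a vertex anywhere} then $v'$ reaches it.

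The main obstacle is precisely this last point: unlike the base case, in the inductive step the wave-$(w'-1)$ leader vertex $v''$ need not have been directly committed by anyone, so I cannot assume a voting quorum for it exists. The correct fix (and the heart of the DAG-Rider proof) is that $v'$, being a round-$\textit{round}(w',1)$ vertex, must have a path to \emph{some} round-$\textit{round}(w'-1,1)$ vertex that is reachable through the quorum-intersection chain, and in fact to the leader vertex $v''$ specifically — this requires showing that the set of wave-$(w'-1)$ round-$1$ vertices in the causal history of $v'$ has size at least... here I must be careful, because iterating ``intersection of two $(k-1)f+1$-sets'' three times down the wave could shrink the reachable set below $(k-1)f+1$. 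The resolution is that we do not need the full leader to be hit at round $1$; we need it at round $2$ (the ``vote'' round one level up), and the indirect commit structure of DAG-Rider is designed so that a single path from $v'$ down to any round-$2$ vertex of wave $w'-1$ that sees $v''$ suffices. I would therefore restate the induction to carry the stronger hypothesis that $v'$ reaches the wave-$(w'-1)$ leader vertex (not merely $v$), chain it with the induction hypothesis, and handle the round-by-round descent using the observation that at round $2$ of wave $w'-1$ there are $(k-1)f+1$ reachable vertices each pointing to the leader at round~$1$ — this is exactly where the ``$(k-1)f+1 \geq f+1$, hence contains an honest vertex'' slack for $k \geq 2$ is indispensable, and verifying it uniformly in $k$ is the one calculation I would do carefully rather than wave through.
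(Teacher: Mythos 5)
Your base case is essentially right, but you burden it with an unnecessary requirement that then threatens to sink the $k=2$ case: you try to ensure the intersection $Q \cap Q'$ contains an \emph{honest} vertex. The intersection is only guaranteed to have size $2\bigl((k-1)f+1\bigr) - (kf+1) = (k-2)f+1$, which for $k=2$ is exactly $1$, and that single vertex may well originate from a Byzantine validator; your proposed refinement (``any quorum of size $(k-1)f+1 \geq f+1$ contains an honest vertex'') does not transfer to the intersection. The point you are missing is that honesty is irrelevant here: equivocation has been eliminated, so any vertex --- Byzantine-sourced or not --- is a single fixed object with a single fixed causal history in every local DAG. If $u \in Q \cap Q'$ then $u$ has a path to $v$ in $DAG_i$, hence in $DAG_j$ as well, and $v'$ points to $u$; one common vertex of any provenance closes the base case. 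This is exactly how the paper argues it.

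The genuine gap is in your inductive step. You induct on waves and try to route $v'$ through the leader vertex $v''$ of wave $w'-1$, but as you yourself observe, $v''$ need not have been committed by anyone, need not be in the causal history of $v'$, and (if its source is Byzantine and silent) need not exist at all --- so ``$v'$ reaches the wave-$(w'-1)$ leader'' is simply not a provable intermediate statement, and your attempts to patch it (carrying a stronger hypothesis about reaching $v''$, or iterating quorum intersection down the four rounds of a wave, which as you note can shrink the reachable set) do not converge to a valid argument. The fix is to strengthen the induction hypothesis in a different direction and induct on \emph{rounds}, not waves: prove that \emph{every} vertex of round $\textit{round}(w+1,1)$ --- not just leader vertices --- has a path to $v$ (this is exactly the base-case quorum intersection, applied to an arbitrary $u \in DAG_j[\textit{round}(w+1,1)]$), and then observe that every vertex of any later round points to at least $(k-1)f+1 \geq 1$ vertices of the previous round, each of which reaches $v$ by the induction hypothesis. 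Quorum intersection is needed exactly once; no intermediate leaders appear anywhere; and the statement for the particular leader $v'$ of wave $w'$ falls out as a special case. This is the paper's proof, and it is three lines long.
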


\begin{proof}
Since vertex $v$ is committed by validator $p_i$, there is a set $V$ of $(k-1)f+1$ vertices in $DAG_i[\textit{round}(w,4)]$ that have paths to $v$.
Any vertex $u$, s.t. $u \in DAG_j[\textit{round}(w+1,1)]$ has $(k-1)f+1$ edges pointing to vertices from the previous round. 
Therefore, due to quorum intersection of two subsets of $(k-1)f+1$ out of $kf+1$ possible vertices, and given that no equivocation is possible, $u$ will have a path to at least one (when $k=2$) of the vertices of $V$, and therefore will have a path to $v$.
By induction, any vertex belonging to rounds greater than \textit{round}($w+1$,$1$), including $v'$, will have a path to $v$. 
\end{proof}

\begin{comment}
Dag Rider is safe with 2f+1 validators as was shown in LetIt
Tee[]. The crux of proof relies on the fact that when the commit rule is (directly) satisfied f+1 validators in round 4 of a wave w have paths to the leader of w in round 1. Since every validator has f+1 links to blocks from the previous round, by quorum intersection (f+1 participants out of 2f+1), every validator broadcast for round 1 of wave w+1 will have a path to the leader of wave w. By induction, every future leader that may be committed will have a path to the leader of wave w, which implies that all correct validators commit the leaders in the same order, and due to deterministic ordering of their causal history, the total order property is maintained. 
\end{comment}

\subsection{Liveness}
The key to proving liveness lies in showing that the protocol ensures that the probability for a leader to be committed in any wave is at least $\frac{(k-1)f+1}{kf+1} \approx \frac{k-1}{k}$. 
Hence, the protocol obtains progress in a constant number of waves in expectation.
The main building block used to prove the above probability is called the common core abstraction by Attiya and Welch~\cite{AW04}.
The procedure has 3 all-to-all asynchronous rounds, where in each round a process broadcasts and collects information from $n-f$ processes. 
By the end of the procedure, every process returns the set of inputs it accumulated, and it was proved in~\cite{AW04} that all correct validators have at least $n-f$ common values originating from the inputs of the validators.
DAG-Rider shows that the common core abstraction can be mapped to the DAG construction. 
Lemma~2 of~\cite{Dag-rider}, which was rephrased for the $2f+1$ case by~\cite{Letittee}, is now generalized for any $k\geq2$~below:

\begin{lemma}
\label{lem:commonCore}
Let $p_i$ be a correct validator that completes wave $w$. 
In this case, $\exists V \subseteq DAG_i[\textit{round}(w,1)]$ and $\exists U \subseteq DAG_i[\textit{round}(w,4)]$ s.t. $|V| \geq (k-1)f+1$, $|U| \geq (k-1)f+1$ and $\forall v \in V,\forall u \in U \colon \textit{path}(u,v)$.
\end{lemma}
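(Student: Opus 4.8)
The plan is to map the 4-round wave of the DAG onto the 3-round "common core" procedure of Attiya and Welch~\cite{AW04} and then invoke their guarantee that all correct validators end up with at least $n-f$ common input values. Concretely, I would set up the correspondence as follows: round $1$ of the wave plays the role of the "inputs" (each correct validator's vertex in $\textit{round}(w,1)$ is its input value); rounds $2$, $3$, and $4$ of the wave correspond to the three all-to-all asynchronous rounds of the common core procedure. The crucial observation making this mapping legitimate is the round-advance rule stated earlier in the excerpt: a validator only broadcasts its round-$r+1$ vertex after adding $(k-1)f+1 = n-f$ vertices of round $r$ to its DAG, and that vertex carries edges to all the round-$r$ vertices it has seen. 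So a vertex $u \in DAG_i[\textit{round}(w,j+1)]$ "collecting from $n-f$ processes in round $j$" is exactly the edge set of $u$, and transitively the causal history of $u$ restricted to round $1$ is exactly the accumulated input set that the common core procedure tracks.

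The key steps, in order, would be: (1) State precisely the common core procedure and its guarantee from~\cite{AW04}, parametrized by $n$ and $f$ with quorums of size $n-f$; note nothing in their argument depends on $n=3f+1$ — it only uses $n-f > n/2$, equivalently $f < n/2$, which holds for all $k \ge 2$ since $n-f = (k-1)f+1 > kf/2 \ge n/2$ (the lone $k=2$ edge case still gives $n-f = f+1 > (2f+1)/2$). (2) Define the simulation: for a correct $p_i$ that completes wave $w$, read off from $DAG_i$ the "messages received" in each of rounds $2,3,4$ via the edge relation, and check this is a faithful execution of the common core procedure — here I use that reliable broadcast plus no-equivocation means every correct validator that holds a given vertex holds the identical vertex with the identical causal history, so the simulated messages are well-defined and consistent across validators. (3) Apply the common core theorem to obtain a set $V$ of at least $n-f = (k-1)f+1$ round-$1$ vertices that lie in the accumulated set of every correct validator who finishes the wave; in particular $V \subseteq DAG_i[\textit{round}(w,1)]$. (4) Translate "$v \in V$ is in the accumulated input set of $u$" back into "$\textit{path}(u,v)$" for every $u$ in round $4$ of $p_i$'s wave, and take $U = DAG_i[\textit{round}(w,4)]$, which has size at least $(k-1)f+1$ by the round-advance rule; conclude $|V|,|U| \ge (k-1)f+1$ and $\forall v\in V,\forall u\in U:\textit{path}(u,v)$.

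The main obstacle I expect is step (2): verifying that the DAG construction really does simulate the common core procedure rather than just resembling it. The subtlety is that the common core procedure assumes each process broadcasts exactly once per round and that "received from $n-f$ processes" refers to a well-defined set, whereas in the DAG a validator may see strictly more than $n-f$ vertices before advancing and different validators may see different supersets. I would handle this by observing that the common core guarantee is monotone — seeing extra values only enlarges the accumulated set, so the $n-f$-sized common core is preserved — and by invoking no-equivocation to ensure that the "value" broadcast by each source in each round is unique, so the sets being intersected are subsets of a common universe of size $n = kf+1$. Once those two points are nailed down, the rest is the bookkeeping of steps (3)–(4), which is routine. This is, as the excerpt notes, exactly the $k$-generalization of Lemma~2 of~\cite{Dag-rider} and its $2f+1$ rephrasing in~\cite{Letittee}, so the argument structure is already known; the only real content is checking that the common core bound $n-f$ goes through verbatim for every $k\ge 2$, which it does because the procedure never uses more about $n,f$ than $f<n/2$.
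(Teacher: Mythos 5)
Your plan follows essentially the same route as the paper: the paper's proof is exactly the Attiya--Welch common-core counting argument inlined on the DAG (a table of round-3-to-round-2 edges with at least $n-f$ ones per row, pigeonhole to find a round-2 vertex $v_l$ in the edge set of all but at most $f$ round-3 vertices, then $n-f>f$ to connect every round-4 vertex to $v_l$ and hence to the $n-f$ round-1 vertices $v_l$ points to), and it confirms your key observation that the only arithmetic fact needed is $n>2f$. The one nit is your justification chain $(k-1)f+1 > kf/2 \ge n/2$, whose second inequality is false since $n=kf+1$; the intended conclusion $n-f>n/2$ nevertheless holds for all $k\ge 2$.
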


\begin{proof}
 
We adapt the common-core proof to the DAG in the following way:
Define a table $T$ with $n$ rows and $n$ columns. 
For each $i$ and $j$, entry $T[i,j]$ contains a 1 if the vertex of validator $p_i$ at round $3$ has an edge to the vertex of $p_j$ (at round 2). 
If $p_i$ did not broadcast its vertex in round 3, $T[i,j]$ contains a 1 if and only if $p_j$ did broadcast its vertex in round 2.
If the validator did broadcast its vertex in round 3, the row contains ($n-f$) ones, as a valid vertex of round 3 points to $n-f$ vertices of round 2.
If $p_i$ did not broadcast its vertex in round 3, there are $n-f$ ones, one for each correct validator that broadcast its vertex in round 2.
Each row contains at least $n-f$ ones, and the total number of ones is $n(n-f)$.
Since there are $n$ columns, some column $l$ contains at least $n-f$ ones.
This means that there are less than $f$ vertices that broadcast a vertex in round 3 but do not have an edge to the vertex $v_l$ of $p_l$ in round 2.
Each vertex in the subset of vertices of round 4, $U$, has $n-f$ edges, and as $n-f>f$ for $n>2f$, they will have a path to vertex $v_l$.
Since $v_l$ is valid, it has $n-f$ edges to vertices of round 1, the subset $V$.
Therefore, each vertex in $U$ has a path to the set of vertices $V$ pointed to by $v_l$.
The sizes for $V$ and $U$ are $n-f$, i.e., $(k-1)f+1$.
\end{proof}

Given Lemma~\ref{lem:commonCore}, the chance for a leader of a wave $w$ to belong to the set $V$ (from the lemma statement), is $\frac{(k-1)f+1}{kf+1}$, and at the end of $w$ it will be committed according to the direct commit rule.
Note that since the validators flip the global coin only after they finish wave $w$, the probability that even a powerful adversary would guess the identity of the leader is roughly equal to $1/n$.
Hence, its ability to manipulate the set $V$ is also very limited.
Consequently, the protocol commits a leader in expectation once every $\frac{k}{k-1}$ rounds.
In other words, the tradeoff in choosing $k$ is that a small $k$ value implies better resilience.
Also, smaller DAGs lead to lower memory consumption, lower computational overhead and shorter messages.
Yet, the drawback of a small $k$ is longer expected commit times (higher latency).

\section{Tusk}
\label{sec:tusk}

As before, we augment Tusk~\cite{Tusk} for a varying number of validators, where the total number of validators is $n=kf+1$, with $k\geq2$ being the redundancy factor and $f$ representing the maximum number of Byzantine nodes.
Every validator $p_i$ builds and orders its DAG, $DAG_i$, according to the following:
\begin{enumerate}
    \item Each round consists of at least $(k-1)f+1$ vertices.
    \item Each vertex points to $(k-1)f+1$ vertices from the previous round.
     \item Each wave is constructed from 3 rounds.
     \item Consecutive waves are pipelined such that round 3 of wave $w$ and round 1 of $w+1$ are executed together.
     \item At the end of a wave $w$, a validator $p_i$ is chosen using the shared coin flip abstraction.
     The leader is $p_i$'s vertex in the first round of $w$, and it is the candidate for commit.
     \item Direct commit: 
     if there are $f+1$ vertices of round 2 that have edges to the leader $v$, then $v$ is committed. That is, the condition is met when there are $f+1$ \emph{voters} to the leader:
     $|  { v' \in DAG_i[\textit{round}(w,2)] \colon
\textit{path}(v',v) }| \geq
f+1
$.
    \item Indirect commit: when the leader $v$ of wave $w$ is directly committed, recursively iterate from wave $w-1$ to the last wave for which a leader was previously committed, and apply the following logic for each wave $w'$, s.t. $w'<w$:
    If there is a path from $v$ to $v'$ such that $v'$ is an uncommitted leader vertex in a wave $w'$, then $v'$ is committed as well.
%{path}(v',v)
    The leaders committed through this process are ordered in ascending order according to their round~numbers ($v'$ before $v$).
\end{enumerate}

\subsection{Safety}
Lemma 1 in~\cite{Tusk} serves as the key argument for safety.
The lemma statement is unchanged, but we prove that it holds for any $n=kf+1$ validators, $k\geq2$:
\begin{lemma}
\label{lem:tusk_quorum}
If an honest validator $p_i$ commits a leader vertex $v$ in a wave $w$, then any leader vertex $v'$ committed by any honest validator $p_j$ in a future wave will have a path to $v$ in $p_j$'s local DAG.
\end{lemma}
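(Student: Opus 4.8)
The plan is to adapt the template of Lemma~\ref{lemma:strongPathLeaders} to the two features that distinguish Tusk: its direct commit rule relies on only $f+1$ voters instead of $(k-1)f+1$, and pipelining identifies round~$3$ of wave $w$ with round~$1$ of wave $w+1$. First I would record what the hypothesis supplies: since $p_i$ directly commits $v$, there is a set $V$ of $f+1$ vertices in round~$2$ of wave $w$, each carrying an edge to $v$; ``being a voter of $v$'' is an intrinsic property of a vertex (one of its edges points to $v$), so $V$ is well defined regardless of whose DAG we inspect, thanks to the absence of equivocation. If $v$ was only indirectly committed by $p_i$, the claim reduces to the wave in which $v$ is directly committed composed with the path argument below, so it suffices to treat the direct case.

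The crux, and the point where $k$ actually enters, is a single-round quorum-intersection step. I would show that for any correct validator $p_j$, every vertex $u \in DAG_j$ lying in round~$3$ of wave $w$ (equivalently, round~$1$ of wave $w+1$) satisfies $\textit{path}(u,v)$ in $DAG_j$. Indeed, $u$ carries $(k-1)f+1$ edges into round~$2$ of wave $w$, and round~$2$ holds at most $n = kf+1$ distinct (non-equivocated) vertices; since $((k-1)f+1)+(f+1) = kf+2 > kf+1 = n$, the neighbor set of $u$ and the voter set $V$ overlap in some vertex $v''$. Then $v''$ is in $DAG_j$ (it is in the causal history of $u$), $v''$ has an edge to $v$ by the definition of a voter, and hence $v$ is in $DAG_j$ as well, so $\textit{path}(u,v)$ holds in $DAG_j$. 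This inequality, $((k-1)f+1)+(f+1) > kf+1$, is exactly what makes the argument go through for every $k \geq 2$, with only a one-vertex overlap in the tight case $k=2$; pinning it down precisely, namely that the overlap is between a quorum of $u$'s \emph{edges} and the voters, both living among the round-$2$ vertices, is the step I expect to require the most care to state cleanly.

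The remainder is a routine induction on round numbers, mirroring Lemma~\ref{lemma:strongPathLeaders}: the base case is the previous paragraph, and any vertex of a later round in $DAG_j$ points to $(k-1)f+1$ vertices of the immediately preceding round, each of which has a path to $v$ in $DAG_j$ by the induction hypothesis, so it does too. Since a leader vertex $v'$ of a wave $w' > w$ lies in round~$1$ of wave $w'$, i.e.\ at or after round~$1$ of wave $w+1$, it follows that whenever $v' \in DAG_j$ we get $\textit{path}(v',v)$ in $DAG_j$, which is the assertion of the lemma (whether $v'$ was committed directly or indirectly is immaterial to the existence of this path). Total order then follows as in DAG-Rider: all honest validators that commit leaders do so along a common chain of path-related leader vertices, and the deterministic ordering of the fixed, equivocation-free causal histories yields identical delivery orders.
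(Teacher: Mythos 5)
Your proposal is correct and follows essentially the same route as the paper's proof: the single quorum-intersection step (the $f+1$ voters in round~$2$ must meet the $(k-1)f+1$ edges of any vertex in round~$3$ of $w$ / round~$1$ of $w+1$, since $((k-1)f{+}1)+(f{+}1) > kf{+}1$), followed by induction over subsequent rounds. The paper states this more tersely, leaving the arithmetic and the indirect-commit case implicit, but the argument is the same.
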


\begin{proof}
$p_i$ commits a (leader) vertex $v$ in a wave $w$ only if there are at least $f+1$ vertices in the second round of $w$ with edges to $v$. 
Since every vertex has at least $(k-1)f+1$ edges to vertices in the previous round, we get by quorum intersection (for every $k\geq 2$, and given that equivocation is impossible) that every vertex in the first round of $w+1$ has a path to $v$. 
By induction, we can show that every vertex in every round in waves higher than $w$, including $v'$, has a path to $v$.
\end{proof}

\begin{comment}
\subsection{2f+1 paticipants - Safety}
The safety follows from the fact that if an honest validator i (directly) commits a leader's block b in wave v, any leader block committed by an honest validator j in a future wave will have a path to b in its DAG.
When the commit rule is satisfied, f+1 validators of round 2 have links to the leader's block.
Since any block has f+1 links to blocks of the previous round,
any block of subsequent rounds will have a path to the leader due to quorum intersection.
\subsection{2f+1 participants -  Liveness}
Tusk is not live for the 2f+1 case under a powerful adversary.
The following example illustrates a scenrio where liveness is not guaranteed at all.
\end{comment}

\subsection{Liveness}
We prove that for $k=2$, liveness is not guaranteed.
To that end, we show a counter example where a powerful adversary controlling the network may prevent the direct commit rule from being met for all validators.
We also prove Tusk's liveness for $k\geq3$.
Specifically, when $k\geq3$, the probability for a leader to be committed in any wave is at least $\frac{(k-2)f+1}{kf+1} \geq \frac{1}{3}$. 
Therefore, the protocol obtains progress in a constant number of waves in expectation.

\paragraph*{For $k=2$, Tusk Does Not Satisfy Liveness!}
We show a counter example of a run in which no validator is able to commit using the above direct commit rule.
Specifically, in Figure~\ref{fig:Tusk_counter}, we depict $DAG_1$ (the top one) and $DAG_2$ (on the bottom) for $k=2$, $f=3$, a total of $n=7$ vertices.
The scenario includes two waves, where round 3 of $w$ and round 1 of wave $w+1$ are pipelined.
By repeating this scenario indefinitely, no commit is possible.
Here, $p_1$ $p_2$, $p_3$ and $p_4$ are honest nodes while $b_1$, $b_2$ and $b_3$ are Byzantine. 
The blue circles represent vertices that were actually (reliably) delivered by $p_1$ while it was executing that round; 4 ($(k-1)f+1$) vertices in each round.
The white vertices are those that were added to the DAG after the the validator has already progressed to the next round. 
%as a result of being in the causal history of some vertex from a later round.
For example, in $DAG_1$, round 2 of wave $w+1$, the (white) vertex of $p_2$ was delivered only after $p_1$ has already moved to round~3.
We emphasize that there is a single version of each vertex in all local DAGs, i.e., no equivocation.

Observing $DAG_1$, at the end of round 3 of wave $w$, there are 4 (blue) vertices.
They point to 5 vertices in round 2, which point to 7 vertices in round 1.
However, no vertex in round 1 is pointed to by 4 ($f+1$) vertices from round 2, and thus the commit rule cannot be satisfied.
In wave $w+1$, a very similar situation occurs, resulting in no commit.
The graph for $p_3$ closely resembles the graph for $p_1$, with the vertices of $p_1$ replaced by those of $p_3$. We present the graph for $p_2$ for completeness, showing that no commit is possible there either.
The graph for $p_4$ is symmetric.

\begin{figure}[htbp]
    \centering
    \includegraphics[width=0.65\linewidth]{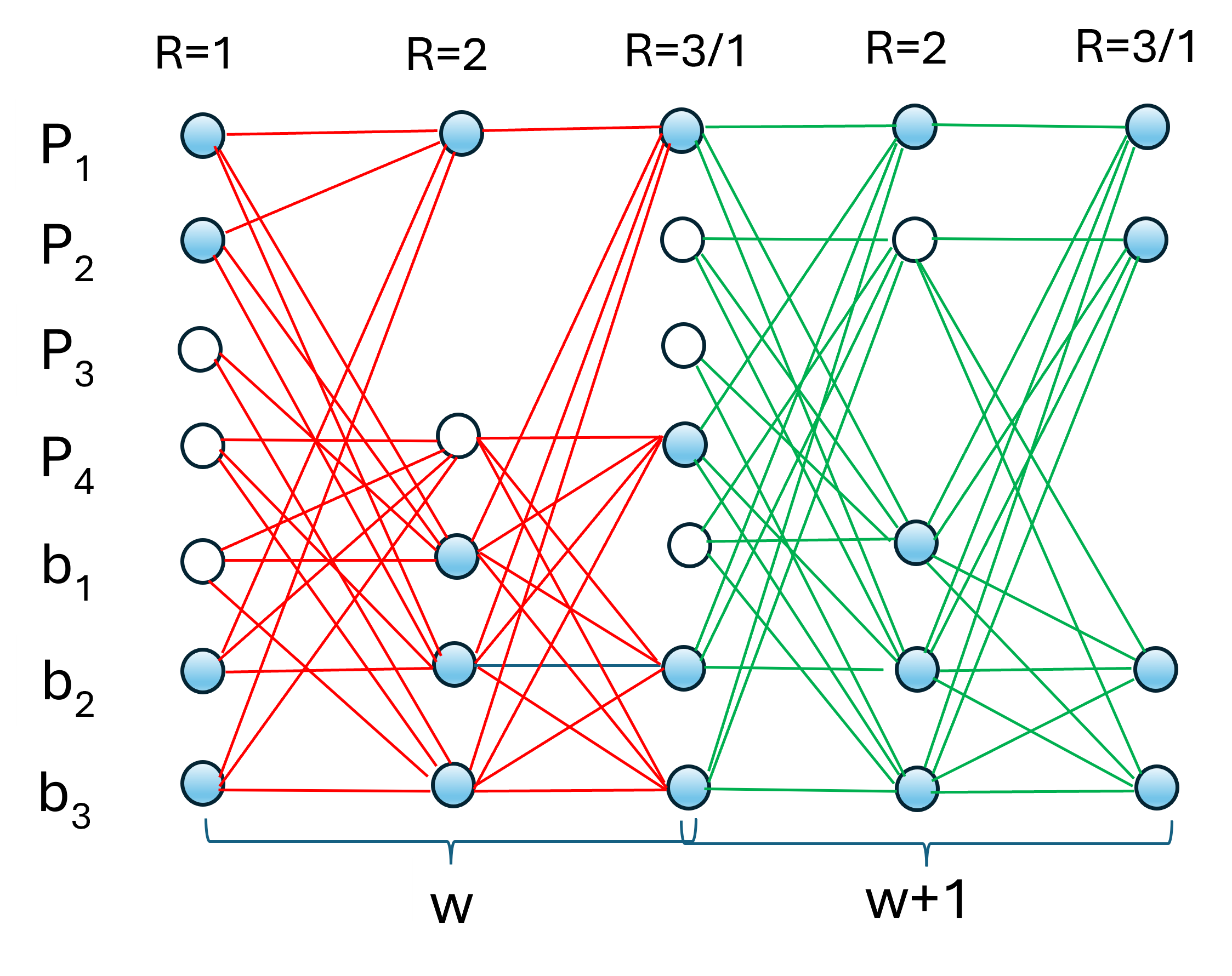}
    %\caption{Caption for Figure 1}
    \label{fig:DAG1}  
    \includegraphics[width=0.65\linewidth]{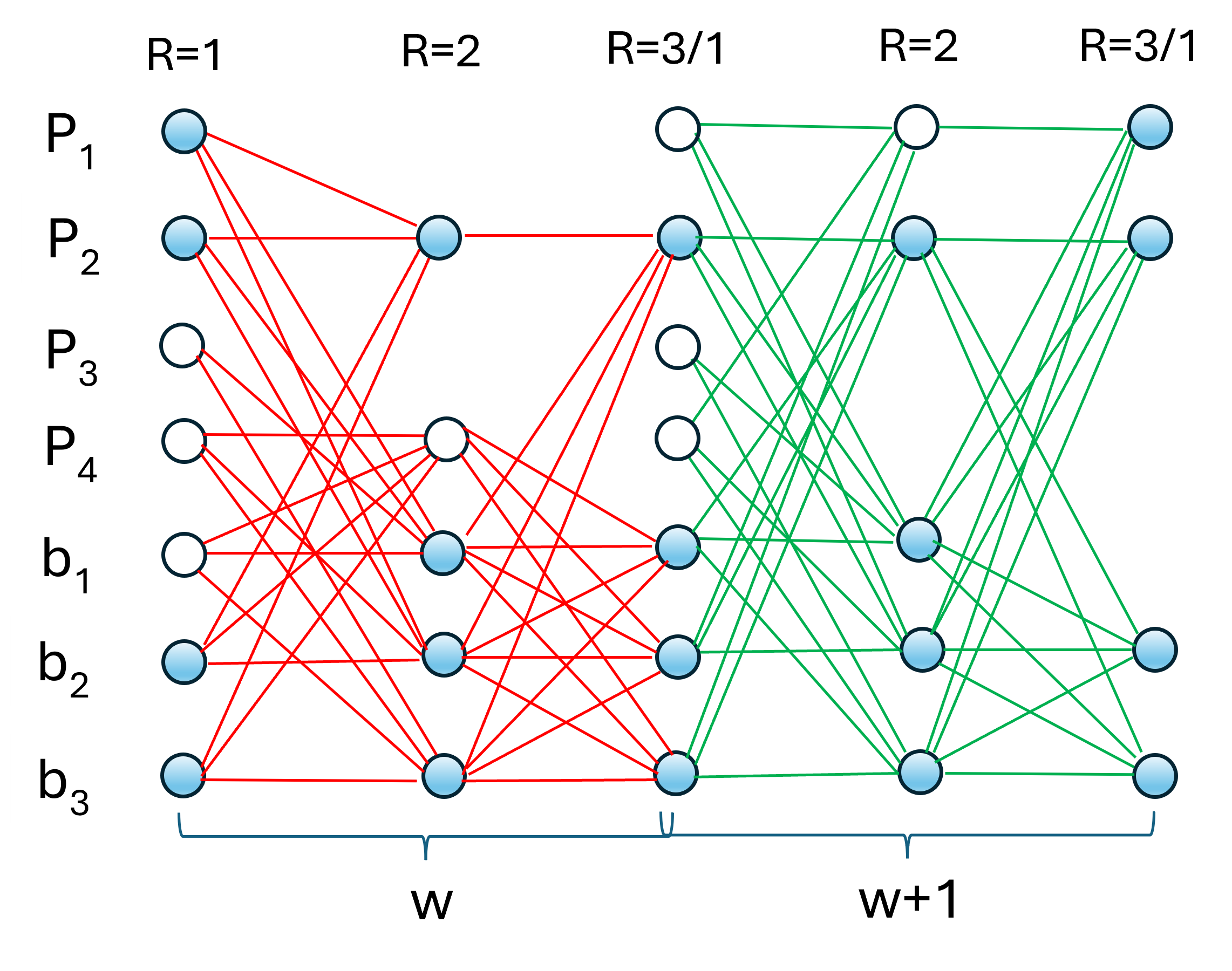}
    %\caption{Caption for Figure 2}
    \label{fig:DAG2}
    
    \caption{Tusk counter example for $k=2$, $f=3$. Two consecutive waves $w$ and $w+1$ are depicted.
     $DAG_1$ is shown above and $DAG_2$ at the bottom.
     The blue vertices represent the 4 ($f+1$) vertices that are added during a given round. The white vertices are those added only after the validator progressed from that round.
     Any vertex in the first round of $w$ is pointed to by less than 4 ($f+1$). The same happens for $w+1$. Therefore, no commit is possible.}
    \label{fig:Tusk_counter}
\end{figure}

\begin{comment}
\begin{figure}
\centering
\includegraphics[width=0.75\linewidth]{DAG1.png}
\caption{\label{fig:DAG1} DAG1 , f=3, 3 rounds per wave }
\end{figure}

\begin{figure}
\centering
\includegraphics[width=0.75\linewidth]{DAG2.png}
\caption{\label{fig:DAG2} DAG2 , f=3, 3 rounds per wave }
\end{figure}
\end{comment}

\paragraph*{For $k\geq 3$, Tusk Satisfies Liveness}
We generalize Lemma~3 from~\cite{Tusk} to state the following:
\begin{restatable}{lemma}{tuskliveness}
\label{lemma:tusk-liveness}
For every wave $w$ there are at least $(k-2)f+1$ vertices in the first round of $w$ that satisfy the commit rule, for any $k\geq3$.
\end{restatable}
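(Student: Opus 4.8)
The plan is to adapt the common-core construction already used in the proof of Lemma~\ref{lem:commonCore} to Tusk's three-round (pipelined) wave structure, but now being more careful about how much quorum intersection "slack" survives. Recall that in Tusk a wave has rounds $1,2,3$, the leader is a vertex in round $1$, and the direct commit rule asks for $f+1$ vertices in round $2$ with edges to the leader. So unlike DAG-Rider, we only need to argue about a single intermediate round (round $2$), and the extra slack we obtain must exceed $f$ rather than $0$ — which is exactly why $k\geq 3$ is needed and $k=2$ fails.

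First I would set up the counting table as in Lemma~\ref{lem:commonCore}: build an $n\times n$ table $T$ where $T[i,j]=1$ iff $p_i$'s round-$2$ vertex points to $p_j$'s round-$1$ vertex (with the same convention as before for validators that did not broadcast in round $2$). Every row has at least $n-f=(k-1)f+1$ ones, so the table has at least $n(n-f)$ ones, hence some column $\ell$ has at least $n-f$ ones; equivalently, at least $n-f$ of the round-$2$ vertices point to $p_\ell$'s round-$1$ vertex $v_\ell$. Next I would count how many round-$1$ vertices enjoy this property: a standard averaging/pigeonhole refinement of the common-core argument shows that there is a set $V\subseteq DAG_i[\mathit{round}(w,1)]$ with $|V|\geq (k-1)f+1 - f = (k-2)f+1$ such that every vertex in $V$ is pointed to by at least $n-f$ round-$2$ vertices. (Intuitively: at most $f$ of the round-$1$ columns can be "bad"; the rest form $V$.) Since $n-f = (k-1)f+1 \geq f+1$ for $k\geq 3$, each such $v\in V$ has at least $f+1$ voters in round $2$, so if the coin picks any leader in $V$ the direct commit rule is satisfied — and $|V|\geq (k-2)f+1$, giving the claimed bound and hence success probability $\geq \frac{(k-2)f+1}{kf+1}\geq \frac13$.

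The main obstacle I anticipate is the pigeonhole/averaging step that produces the set $V$ rather than just a single column $\ell$: I need to argue that the "deficit" in the table forces at least $(k-2)f+1$ columns to each have $\geq n-f$ ones, not merely one such column. This requires bounding the total deficit — each row is short of $n$ by at most $f$ (it has exactly $n-f$ ones when the vertex was broadcast, or $n-f$ ones by the no-broadcast convention), so the total number of zeros is at most $nf$, hence at most $f$ columns can fall below the $n-f$ threshold (each such column would need to contribute more than... — here I'd need to phrase the counting so that a column with fewer than $n-f$ ones is "expensive"). A second, more subtle point is that the commit condition in Tusk is stated in terms of paths from round-$2$ vertices, not direct edges, so I should note that a direct edge is a path of length one, so bounding voters-by-direct-edges only strengthens the conclusion; and I must make sure the no-equivocation assumption is invoked so that "the vertex of $p_\ell$ at round $1$" is unambiguous across all local DAGs, matching how Lemma~\ref{lem:tusk_quorum} then carries the committed leader forward.
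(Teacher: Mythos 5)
Your overall strategy (count edges from round~2 to round~1 and average over the round-1 vertices) is the right one and matches the paper's in spirit, but the quantitative step you yourself flag as ``the main obstacle'' is exactly where the argument breaks. You ask for a set $V$ of round-1 vertices each receiving at least $n-f$ incoming edges, and claim that ``at most $f$ of the round-1 columns can be bad.'' That is false: the table has at most $nf$ zeros, and a column is already bad once it has $f+1$ zeros, so up to roughly $nf/(f+1)\approx n$ columns can be bad; the counting only guarantees about $n/(f+1)$ good columns (a constant, e.g.\ about $3$ when $k=3$), which is far short of $(k-2)f+1$ for large $f$. A telling sanity check: nothing in your argument genuinely uses $k\geq 3$ (the inequality $n-f\geq f+1$ already holds at $k=2$), so if it were correct it would prove liveness for $k=2$ as well, contradicting the counterexample of Figure~\ref{fig:Tusk_counter}.

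The fix is to lower the threshold from $n-f$ voters to $f+1$ voters, which is all the commit rule requires, and this is what the paper's proof does. Fix an \emph{arbitrary} set $S$ of $(k-1)f+1$ round-2 vertices --- arbitrary because the adversary schedules which $n-f$ round-2 vertices a validator holds when the coin is flipped, so you cannot work with all $n$ rows as in the DAG-Rider common-core table. The vertices of $S$ emit $\left((k-1)f+1\right)^2$ edges to round~1. Grant every one of the $kf+1$ round-1 vertices $f$ voters for free; the remaining $\left((k-1)f+1\right)^2-f(kf+1)=(k^2-3k+1)f^2+(2k-3)f+1$ edges must still land somewhere, and each round-1 vertex can absorb at most $(k-1)f+1-f=(k-2)f+1$ of them beyond its free allocation. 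Hence at least $\frac{(k^2-3k+1)f^2+(2k-3)f+1}{(k-2)f+1}\geq (k-2)f+1$ round-1 vertices receive an $(f+1)$-st voter from $S$, and this last inequality is precisely where $k\geq 3$ enters. Your closing remarks --- that a direct edge is a path of length one and that no-equivocation makes the leader vertex unambiguous across local DAGs --- are correct and carry over unchanged.
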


\ifdefined\ICBCCR
For lack of space, and since this is a generalization of the proof in~\cite{Tusk}, we defer it to~\cite{ourfullpaper}.
\else
Since this is a simple generalization of the proof in~\cite{Tusk}, we defer it to \appendixref[, section~1]{proof:tusk-liveness}.
\fi
Given the above lemma, and the fact that the leader is chosen only at the end of the wave, the probability to elect a leader that satisfies the commit rule in a any given wave $w$ is $\frac{(k-2)f+1}{kf+1} \geq \frac{1}{3}$.

\subsection{Tusk with Random Delays}

The work in~\cite{Tusk} also analyzes the probability of committing a leader when the network operates with random delays rather than being controlled by a powerful adversary.
We show that the commit probability for $k=2$ is $\frac{1}{2}^{f+1}$, for $k=4$ it is $0.94$, and for $k=5$ it becomes~$0.99$.
\ifdefined\ICBCCR
For lack of space, the details appear in~\cite{ourfullpaper}.
\else
The technical details appear in \appendixref[, section~2]{app:tusk-random-delays}.
\fi

\section {Bullshark-Asynchronous Instance}
\label{sec:asynch-bullsharl}
Bullshark~\cite{Bullshark} is a successor of DAG-Rider, which optimizes for synchronous periods. 
To achieve this, Bullshark adds additional leaders to each wave, and introduces two types of votes for vertices: \emph{steady-state} for the predefined leader and \emph{fallback} for the random~one.

Each wave in Bullshark consists of 4 rounds, where the first round has two potential leaders, a \textit{steady-state} and a \textit{fallback} leader. 
The third round of a wave also has a predefined leader, the second steady-state leader.
Being predefined, the steady-state leaders are chosen deterministically, while the fallback leader is selected at the end of the wave (in round~4) based on the shared random coin.
Although there are two types of leaders in each wave, both fallback and steady-state leaders should never be committed within the same wave.

Each validator has a voting type for a certain wave $w$, and its vertices can vote in that wave accordingly. 
Hence, they cannot vote for both the fallback and steady-state leaders in the same wave $w$. 

Validator $p_i$ determines $p_j$'s voting type in wave $w$ once $p_j$'s vertex $v$ belonging to the first round of $w$ is added to $DAG_i$.
Once $p_i$ adds $p_j$'s vertex $v$, it can observe all its causal history for wave $w-1$ (Byzantine validators cannot falsify this) and determine according to the following:
If $p_j$ managed to commit either the fallback leader or the second steady state leader in wave $w-1$, its type for wave $w$ is determined as steady-state;
otherwise, as fallback.
Note that this means that $p_j$'s voting type in wave $w$ is consistent across all validators that receive $v$, as they all observe the same $v$ (no equivocation), hence the same causal history. 

Vertices in the second round of a wave $w$, round($w$, $2$), can vote to the first steady state leader, assuming they have steady-state voting type.
Vertices of the $4^\mathrm{th}$ round, round($w$, $4$), can vote for the second  steady state leader, which is in round($w$, $3$), or for the fallback leader, which is in round($w$, $1$), depending on their voting type.
Voting is ``done'' simply by having a path to the leader.

Since Bullshark's goal is to take advantage of synchronous periods, it includes timeouts in the DAG construction.
We follow Bullshark's construction and commit rule and extend them as follows:
\begin{enumerate}
    \item Each round consists of at least $(k-1)f+1$ vertices.
    \item Each vertex points to at least $(k-1)f+1$ vertices from the previous round.
    \item Advancing to even rounds: Advance to the second and fourth rounds of a wave only if (1) the timeout for this round expired or (2) the wave's predefined first or second steady-state leader, respectively, has been delivered.
    \item Advancing to odd rounds: Advance to the third round of a wave or to the first round of the next wave if (1) the timeout for this round expired or (2) $(k-1)f+1$ vertices in the current round with steady-state voting type and edges to the first and second steady-state leader, respectively, have been delivered.
    \item Upon delivery of $p_i$'s vertex in round $1$ of wave $w$, consider the vertices $v$ points to as ``potential votes''.
    These vertices belong to wave $w-1$ and each of them has a voting type that was already previously determined (when they were delivered in rounds of wave $w-1$).
    Using these votes, try to commit either the second steady state leader or the fallback leader of wave $w-1$ (leaders with different types should never be committed within the same wave) as follows:
    \begin{enumerate}
        \item Direct commit, fallback: try to commit the fallback leader of wave $w-1$ according to the direct commit rule: at least $(k-1)f+1$ out of the potential votes must have paths to the leader and a fallback voting~type. 
        \item Direct commit, $2^\mathrm{th}$ steady-state: try to commit the second steady-state leader of wave $w-1$ according to the direct commit rule: at least $(k-1)f+1$ out of the potential votes must have paths to the second steady-state leader and a steady-state voting~type. 
        \item Determine votes: if either of these leaders is committed, determine $p_i$'s voting type for wave $w$ as steady-state. 
        Otherwise, determine it as fallback.
    \end{enumerate}
    \item Direct commit, first steady-state: Upon delivery of $p_i$'s vertex $v$ in round($w$,$3$), consider the vertices~$v$ points to as ``potential votes''.
     Try to commit the first steady-state leader of wave~$w$ according to the direct commit rule: at least $(k-1)f+1$ out of the potential votes must have paths to the first steady-state leader and a steady-state voting type.
    \item Indirect commit: when a leader $v$ (steady-state or fallback) is directly committed, traverse backwards and try to commit previous leaders which were not committed yet, i.e., candidate leaders, as follows:
    \begin{enumerate}
    \item 
    For a steady-state candidate leader in round $r$, potential votes = \{$u \in  DAG[r+1] ~|~ path(v,u)$\}.
    \item For a fallback candidate leader in round $r$, 
    potential votes = \{$u \in DAG[r+3] ~|~ path(v,u)$\}.
    \item Compute actual votes: the potential voters that have a path to the candidate leader; Set the actual votes for fallback to the empty set in rounds where there is no fallback leader or when a steady-state leader has already been committed in the wave.
    \item If one of the leaders has at least $(k-2)f+1$ votes, while the other type has at most $f$, order the leader (in ascending order according to its round) and continue traversing~backwards.\label{indirect_decision}
    \end{enumerate}
\end{enumerate}

\subsection{Asynchronous Bullshark Safety}
We show that the Asynchronous Bullshark augmentation for $k=2$ does not maintain safety, but it is safe for any $k\geq 3$.

\paragraph*{Asynchronous Bullshark is Not Safe with $k=2$}

%\todo[inline]{razya: I am removing this part of intuition why safety requires k>2}
%\begin{comment}
    
For intuition, we first briefly explain why safety requires $k\geq3$. 
Safety in Bullshark relies on the fact that direct commit of a leader's vertex requires $(k-1)f+1$ validators of the same type as the leader (steady-state/fallback) with paths to the leader's vertex.
This implies that any other correct validator sees at least $(k-2)f+1$ of these votes in its DAG, and $f$ or less validators with the other type (which serves as the indirect commit rule). 
Note that if $(k-2)f + 1$ are steady-state (fallback) voters, a fallback (steady-state) leader could not have been committed directly, as there are at most $2f$ such voters, which is insufficient to form a direct commit in the case that $k \geq 3$. A direct commit requires at least $(k-1)f + 1 \geq 2f$~voters.

%Trying to adjust to the $2f+1$ case, we break safety. 
%\end{comment}
To see why safety breaks when $k=2$, suppose a validator $p_i$ directly committed a steady-state leader due to detecting $f+1$ steady-state voters.
Another validator $p_j$ may see only one of these voters in $DAG_j$, which is not enough to determine whether there could have been $f+1$ steady-state voters or $f+1$ fallback voters.
The $f$ potential voters, which $p_j$ does not know about, could be either steady-state voters or fallback~voters.

We demonstrate in Figure~\ref{fig:BS_counter} a safety violation of the specified protocol for $k=2$. We depict in the example the DAG for a wave $w$ of validator $p_4$ when there are $2f+1$ validators in total ($k=2, f=2$). %\todo{Is this a safety or liveness problem?}
At the beginning of $w$, $p_4$ determines the voting types for the vertices of $p_1$, $p_4$ and $p_5$. 
Suppose $p_1$ is determined as steady state (colored orange in the figure), while $p_4$ and $p_5$ are fallback voters (green vertices) for wave $w$.
Assume that $p_4$'s vertex of round $1$ was elected (using the shared coin) as the fallback leader at the end of the wave, while the (first) steady-state leader of $w$ is the vertex of $p_5$.
Assume also that $p_4$ managed to directly commit a leader in wave $w'>w$.
According to the protocol, when a leader is directly committed, the validator traverses the DAG backwards to try and indirectly commit leaders that may have been committed by other validators.

When $p_4$ reaches wave $w$, and tries to decide whether to indirectly commit a leader in round $1$ of $w$, it has two candidates. 
At this point, $p_4$ computes the votes for each potential leader:
it finds 2 fallback voters and 1 steady-state voter for each of the leaders.
$p_4$ needs to choose according to the indirect commit rule specified in Item~\ref{indirect_decision} above.
However, both conditions are fulfilled:
the steady-state leader has at least $(k-2)f+1 = 1$ steady-state voters and at most $f = 2$ fallback voters, and the fallback leader has at least $(k-2)f+1 = 1$ fallback voters and at most $f = 2$ steady-state voters.
Since the (indirect) rule is fulfilled, $p_4$ may commit a leader in $w$. 
However, any choice made by $p_4$ could be wrong, as the leader of the other type might have been committed (depending on the voting types of $p_2$ and $p_3$).

\begin{figure}[htbp]
\centering
\includegraphics[width=0.65\linewidth]{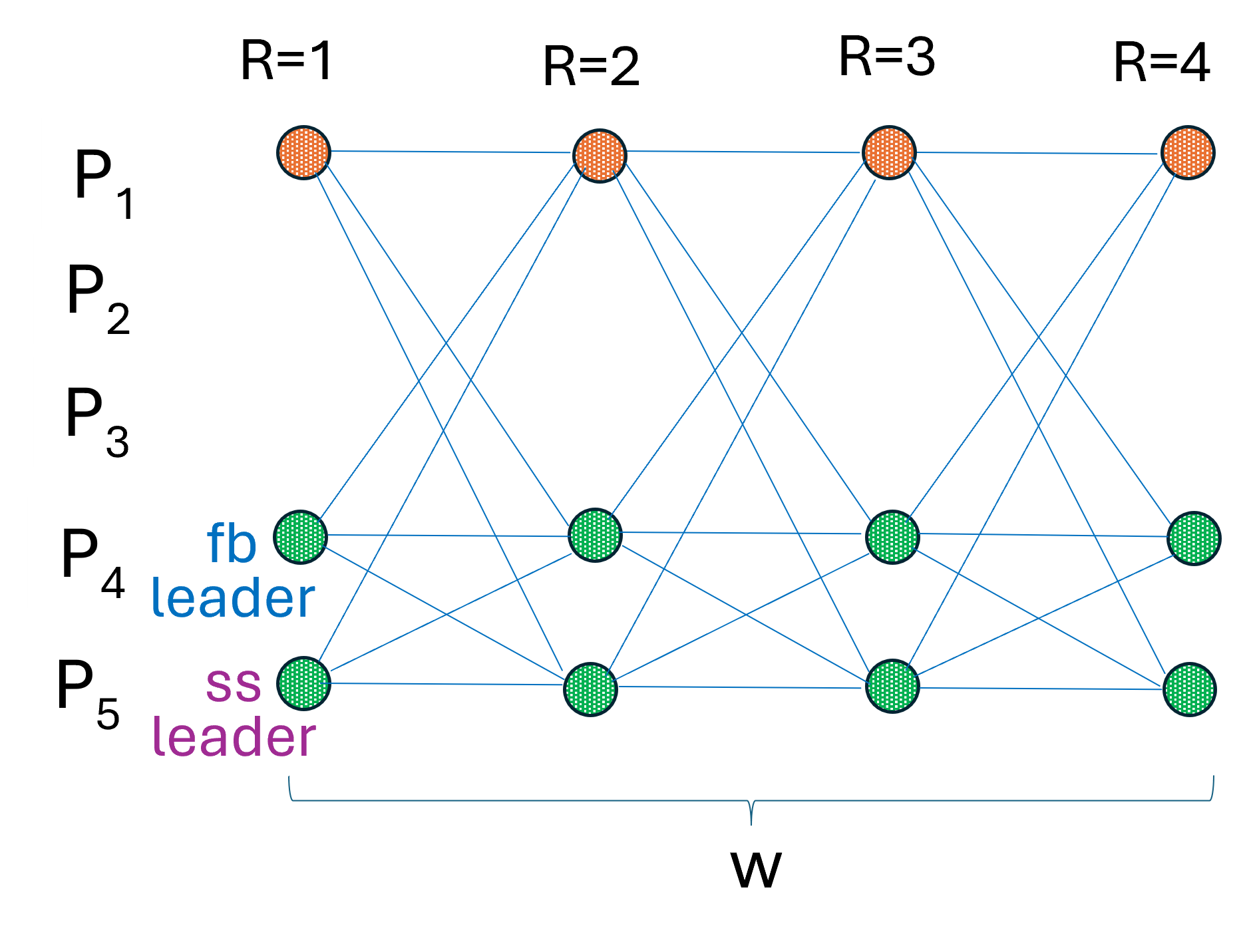}
\caption{\label{fig:BS_counter} Asynchronous Bullshark showing safety violation for $k=2$. Consider the DAG of validator $P_4$ at wave $w$, $f=2$.
The steady-state vertices of $p_1$ are colored orange, while the green vertices of $p_4$ and $p_5$ have fallback voting type.
Validator $p_4$ has the indirect rule fulfilled for both types of leaders in $w$. 
Any leader it chooses to commit could be the wrong one.}
\end{figure}

\paragraph*{Asynchronous Bullshark is Safe when $k\geq 3$}

\ifdefined\ICBCCR
For lack of space, and since the proofs are merely a generalization of the ones in~\cite{Bullshark}, we provide the details in~\cite{ourfullpaper}.
\else
Since the proofs are merely a generalization of the ones in~\cite{Bullshark}, we provide the details in \appendixref[, section~3]{app:asynch-bullshark-safe}.
\fi

\subsection{Asynchronous Bullshark Liveness (when \texorpdfstring{$k\geq 3$}{k >= 3})}
\label{sec:asynch-bullshark-live}

Since we showed that when $k=2$, Bullshark is not safe, for the liveness part we focus on the generalized case that $k\geq 3$, and show it to be live.
\ifdefined\ICBCCR
As before, due to lack of space and since the proofs are a generalization of the ones in~\cite{Bullshark}, we state them in~\cite{ourfullpaper}.
\else
As before, due to lack of space and since the proofs are a generalization of the ones in~\cite{Bullshark}, we state them in \appendixref[, section~4]{app:asynch-bullshark-live}.
\fi

\begin{comment}

\subsection{Bullshark - Expectation examples}

 Table~\ref{tab:bullshark_prob_exp} shows examples of the  waves and rounds expectations depending on different values of the factor $k$.
 Unlike Tusk, but similar to DAG-Rider,
 there are 4 rounds in each wave.

\begin{table}
\centering
\begin{tabular}{ c|c|c|c } 
 
 k & prob & exp waves & exp rounds \\  \hline
 3 & 2/3 & 3/2 & 6\\ 
 4 & 3/4 & 4/3 & $5\frac{1}{3}$ \\
 5 & 4/5 & 5/4 & 5 \\
 6 & 5/6 & 6/5 & $4\frac{4}{5}$\\
 7 & 6/7 & 7/6 & $4\frac{2}{3}$\\
 \hline
\end{tabular}
\caption{\label{tab:bullshark_prob_exp} Bullshark - probability and expected values for various k values.}
\end{table}

\end{comment}    

\section {Bullshark - Partially Synchronous Instance}
\label{sec:synch-bullshark}

The partial synchrony version of Bullshark~\cite{Bullshark} has no fallback leaders, only two predefined (steady-state) leaders for each wave. 
The protocol works as follows:

\begin{enumerate}
    \item Each round consists of at least $(k-1)f+1$ vertices.
    \item Each vertex points to at least $(k-1)f+1$ vertices from the previous round.
    \item Advancing to even rounds: A validator advances to the second and fourth rounds of a wave only if (1) the timeout for this round expired or (2) it delivered the wave's predefined first or second leader, respectively.
    \item Advancing to odd rounds: A validator advances to the third round of a wave or the first round of the next wave, if (1) the timeout for this round expired or (2) it delivered $(k-1)f+1$ vertices in the current round with edges to the first or second leader, respectively.\label{advanceES_BS}
    %\todo{Advancing rounds once f+1 voters is not enough?}
   \item Direct commit~1: when a vertex $v$ is added in the first round of wave $w$, consider the vetices $v$ points to as votes. If $f+1$ of these votes have a path to the second leader (which is in the third round) of wave $w-1$, commit this~leader.
    \item Direct commit~2: when a vertex $v$ is added in the third round of wave $w$, consider the vertices $v$ points to as votes. If $f+1$ of these votes have a path to the first leader (which is in the first round) of wave $w$, commit this leader.
    \item Indirect commit: 
    when a leader $v$ is directly committed, recursively traverse backwards and try to commit uncommitted leaders. If there is a path from $v$ to the candidate leader $v'$, commit $v'$.
    The leaders are ordered in ascending order according to their round numbers, such that $v'$ is ordered before $v$.
\end{enumerate}
\subsection{Safety}
We claim that the partial synchrony instance is safe for $k\geq2$.
The following lemma establishes total order on the ordering of the leaders, and is proven
\ifdefined\ICBCCR
in~\cite{ourfullpaper}
\else
in~\appendixref[, section~5]{app:ps-bullshark}:
\fi
\begin{restatable}{lemma}{psbullsharksafety}
\label{lemma:ps-bullshark-safety}
If an honest validator $p_i$ commits a leader vertex $v$, then any leader vertex $v'$ committed by any honest validator $p_j$ in a future round will have a path to $v$ in $p_j$'s local DAG. 
\end{restatable}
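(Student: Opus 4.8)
The plan is to mirror the safety arguments used for DAG-Rider (Lemma~\ref{lemma:strongPathLeaders}) and asynchronous Bullshark, but exploit the fact that in the partially synchronous instance there is no fallback/steady-state distinction, so a single clean quorum-intersection argument suffices. First I would reduce the claim to the case where $p_i$ \emph{directly} commits $v$: if $v$ is committed indirectly, then by the indirect commit rule it lies on a path from some later directly committed leader, and that directly committed leader already satisfies the conclusion by the direct case, so a path to $v$ follows by transitivity of $\textit{path}$. Hence assume $v$ is directly committed by $p_i$, say $v$ is the leader of wave $w$ (either the first leader, in round $\textit{round}(w,1)$, detected at round $\textit{round}(w,3)$ via Direct commit~2, or the second leader, in round $\textit{round}(w,3)$, detected at round $\textit{round}(w+1,1)$ via Direct commit~1).

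The core step is the quorum-intersection / induction argument. A direct commit of $v$ means there is a set $S$ of $f+1$ vertices, one round after $v$'s round, each with a path to $v$. I would then show by induction on rounds $r \ge \textit{round of the vote layer}$ that every vertex $u$ in round $r$ of any honest validator's DAG has a path to $v$. For the base case I use the vote layer itself: actually I want to push one round further — every vertex $u$ in the next round points to $(k-1)f+1$ vertices of the vote round; since the vote round has at most $kf+1 = n$ vertices and $S$ has $f+1$ of them, and $(k-1)f+1 + (f+1) = kf+2 > kf+1$, the two sets intersect (using that equivocation is impossible, so the intersecting vertex is literally the same vertex in both DAGs and carries the same path to $v$). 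So $u$ reaches some element of $S$, hence reaches $v$. The inductive step is identical: any vertex in round $r+1$ points to $(k-1)f+1$ vertices of round $r$, all of which (by hypothesis) reach $v$, so it reaches $v$ too. In particular any later leader vertex $v'$, lying in some round strictly after $v$'s, has a path to $v$ in $p_j$'s DAG.

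I would then close the loop on ordering: since $v'$ has a path to $v$ and $v$ is an uncommitted-at-that-point leader in an earlier round, the indirect commit rule forces $p_j$ to commit $v$ and order it before $v'$; combined with the deterministic ordering of causal histories (guaranteed by the no-equivocation assumption), this yields total order. One subtlety I expect to be the main obstacle is the interaction between the two leaders per wave and the round-advancement timeouts: I must check that the "vote round" for each of the two direct commit rules is exactly one round below the leader (round $3$ votes for the round-$1$ leader; round $1$ of the next wave votes for the round-$3$ leader), so that the $f+1$ voters and the $(k-1)f+1$ out-edges of the next round live in the same round and the counting $kf+2 > kf+1$ applies uniformly. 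Once that bookkeeping is pinned down, the argument goes through for every $k \ge 2$, which is why — unlike asynchronous Bullshark — the partially synchronous instance is safe even at $k=2$: there is no second leader type whose indirect rule could be simultaneously satisfiable.
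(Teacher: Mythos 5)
Your proposal is correct and follows essentially the same route as the paper's proof: a direct commit yields $f+1$ voters in the round after the leader, each vertex of the following round has $(k-1)f+1$ out-edges, and $(f+1)+\bigl((k-1)f+1\bigr)=kf+2>n$ gives the intersection, after which induction over rounds carries the path to every later leader $v'$. Your explicit reduction of indirectly committed leaders to the directly committed case is a small addition the paper leaves implicit, but it does not change the argument; just note that the voters actually live in the round immediately after the leader (round $2$ for the round-$1$ leader, round $4$ for the round-$3$ leader), not in the round where the commit is detected.
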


\subsection{Liveness}
The statement of Claim~9 in~\cite{Bullshark} indicates that after GST, if there are two consecutive honest leaders, then the second one will be committed by all honest validators.
The proof does not rely on the size of $k$.
Rather, the proof relies on the properties of reliable broadcast and the rules for advancing rounds.
Therefore, it can be adapted almost verbatim.

Given that the timeouts are greater than $3\Delta$, and  both leaders of a wave $w$ are honest, they show all honest validators advance to round($w$,$4$) within $\Delta$ of each other, and they all deliver and add the second leader in round($w$,$3$) to their DAGs.
Thus, at the beginning of round $4$, when broadcasting its vertex, each such honest validator adds an edge to the second leader vertex.
Then, before advancing to the first round of the next wave, and since they are at most $\Delta$ time away from each other, the timeout is large enough so that they wait for each other's vertices (described above in bullet~\ref{advanceES_BS}).

Thus, each honest validator will get $(k-1)f+1$ vertices in round($w$,$4$) with paths to the second leader of $w$, and will commit this leader according to the direct commit rule.
It should be mentioned that the underlying assumption stating that eventually there will be a wave with two contiguous honest leaders is true for every $k\geq2$.

\section{Related Work}
\label{sec:related}

\paragraph*{Consensus Termination in Non-DAG Algorithms}
The impact of larger quorums on the complexity and termination time of classical consensus protocols has been studied in the context of classical algorithms, e.g., termination in one communication step in favorable runs~\cite{FMR05}.
A protocol that reaches asynchronous Byzantine consensus in two communication
steps in the common case with $5f+1$ validators was presented in~\cite{Fab}.
However, the protocol of~\cite{Fab} only ensures liveness during periods of synchrony.
Recently, an improved tight bound of $5f-1$ validators was given in~\cite{KTZ21}.
Some works, such as~\cite{ezBFT,Zyzzyva}, have also considered a combination of a fast path requiring quorums of $3f+1$ (out of $3f+1$) nodes with slow paths that work with $2f+1$ sized quorums.
Many works have also addressed the issue of improving the expected termination time of asynchronous protocols while ensuring maximal resiliency (of $3f+1$), e.g.,~\cite{AMS19,Aleph,MR17}.

%\paragraph*{TEE-Based Algorithms}
%\todo[inline]{This is lower priority}

\paragraph*{DAG Based Algorithms}
Shoal~\cite{Shoal}, Shoal++~\cite{Shoal++} and Sailfish~\cite{Sailfish} are recent DAG based protocols that assume a partially synchronous model.
Shoal~\cite{Shoal} enhances the partially synchronous Bullshark with pipelining and a zero overhead leader reputation mechanism. 
Since the basic structure follows Bullshark, we conjecture that it can be adapted to utilize only $2f+1$ validators in a similar manner to our Bullshark augmentation.
Shoal's successor, Shoal++\cite{Shoal++}, attempts to further reduce end-to-end latency, by employing key ideas such as treating all vertices as leaders and operating multiple DAGs in parallel.
Sailfish~\cite{Sailfish} supports a leader vertex in each round. 
In addition, it facilitates multiple leaders within a single round.
\ifdefined\ICBCCR
Studying the augmentation of these protocols to $2f+1$ validators is left for future work.
\else
Studying whether these protocols can be augmented to work with $2f+1$ validators when equivocation is eliminated, e.g., through the use of a TEE, is left for future work.
\fi

CordialMiners~\cite{CordialMiners} and Mysticeti~\cite{mysticeti} are DAG protocols that do not eliminate equivocation prior to inserting a proposal into the DAG.
\ifdefined\ICBCCR
\else
Instead, they address equivocation as part of the algorithm itself, thus circumventing the costs related to reliable broadcast.
\fi
Exploring the impact of incorporating a TEE mechanism into these algorithms is beyond the current scope and is left for future work.

\section{Discussion}
\label{sec:discussion}
Table \ref{tab:Summary} summarizes our findings. 
% for the protocols we analyzed.
Based on our analysis, when there are $2f+1$ validators participating in the protocol, DAG-Rider is proven to maintain both safety and liveness. 
Tusk preserves safety.
However, since its liveness relies on a counting argument that requires a sufficient number of votes in the round following the leader's proposal, it fails under the $2f+1$ assumption.
Asynchronous Bullshark, while similar to DAG-Rider, has two types of leaders and therefore requires larger quorums to differentiate between them, resulting in compromised safety with $2f+1$ validators.
Partially synchronous Bullshark preserves safety and relies on the Global Stabilization Time (GST) and two consecutive honest leaders to ensure liveness, which is achievable in the $2f+1$ case.
\ifdefined\ICBCCR
\else
Coming up with general design guidelines for a protocol ensuring both safety and livenesss with $2f+1$ validators remains an open problem which is left for future work.
\fi

Using more than $3f+1$ validators presents a trade-off: while it reduces the expected termination time for asynchronous protocols, the returns are diminishing fast. 
It also leads to larger graphs, higher communication overhead, and lower~resilience.

%Applying TEEs to other DAG protocols is left as an open issue.
%Would they degenerate to a variant of~\cite{Letittee}, or maintain an inherent uniqueness?

\clearpage

%\begin{thebibliography}{refs}
\bibliographystyle{plain}
\bibliography{refs}

%\end{thebibliography}
\ifdefined\ICBCCR
\else
\ifappendix
\newpage

\appendices
% \section{section 4 in github}\label{app:sec:4}
\section{Tusk liveness}\label{app:sec:tusk-liveness} \label{proof:tusk-liveness}
% \setcounter{lemma}{3} % This sets the next lemma to be numbered 4
% \begin{lemma}
% \label{lemma:tusk-liveness}
% For every wave $w$ there are at least $(k-2)f+1$ vetices in the first round of $w$ that satisfy the commit rule, for any $k\geq3$.
% \end{lemma}
\tuskliveness*
\begin{proof}
We follow the proof line presented in~\cite{Tusk}:
Consider any set $S$ of $(k-1)f+1$ vertices in the second round of wave $w$.
The total number of edges they have to the first round is
$\left((k-1)f+1\right)\left((k-1)f+1\right) = (k-1)^2f^2+2(k-1)f+1$.
The number of possible vertices in the first round of $w$ is $kf+1$.
Therefore, even if every vertex in the first round has $f$ voters (edges pointing to it) from
vertices in $S$, there are still $(k-1)^2f^2+2(k-1)f+1 - f(kf+1) = (k^2-3k+1)f^2+(2k-3)f+1$ edges.
The maximum number of edges from vertices in $S$ to each vertex in the first round is $(k-1)f+1$.
This is the maximum number of voters from $S$ any vertex may have.
After subtracting the $f$ voters we already accounted for, we get that each round 1 vertex can contain no more than $(k-1)f+1-f=(k-2)f+1$ votes.
Therefore, at least $\frac{(k^2-3k+1)f^2+(2k-3)f+1}{(k-2)f+1} \geq (k-2)f+1$ round 1 vertices will have $f+1$ votes from vertices in $S$.
Note that the above inequality holds only when $k\geq3$.
%--- Algebra:
%\begin{align}
%$((k-2)f+1)^2=(k^2-4k+4)f^2+(2k-4)f+1
%\leq
%(k^2-3k+1)f^2+(2k-3)f+1$
%for $k\geq3$
%The comparison of the first term of the expressions is depicted in Figure \ref{fig:function})
%\begin{figure}
%\centering
%\includegraphics[width=0.5\linewidth]{functions_comparison.png}
%\caption{\label{fig:function}comparing the first term of the algebraic expressions}
%\end{figure}
\end{proof}

\section{Tusk with Random Delays}
\label{app:tusk-random-delays}

We now analyze the probability that Task will commit a leader in a given round when the network operates under random network delay for $k=2$, $k=4$ and $k=5$.
Let $S$ be the set of vertices of round($w$,$2$) of a wave $w$.
Message delays are distributed uniformly at random, and each vertex in $S$ points to $(k-1)f+1$ vertices of round$(w,1)$ independently of other vertices in $S$. Thus, the probability for a vertex in $S$ to point to the leader is $\frac{(k-1)f+1}{kf+1}$.
According to Tusk's commit rule, the leader is committed when $f+1$ such vertices from $S$ point to (vote for) the leader.

When $k=2$, $\frac{(k-1)f+1}{kf+1} = \frac{1}{2}$, there are $f+1$ vertices in S, resulting in the probability to commit being $\frac{1}{2}^{f+1}$.
The $k = 3$ case was analyzed in \cite{Tusk} where it was proved that the probability is at least $0.74$.
The computation was done for $f=1$, because for bigger $f$ the probability is~higher.

We repeat the computation for $k=4$:
Each vertex in $S$ points to the leader with
probability of at least $\frac{(4-1)f+1}{4f+1} \geq 3/4$.
Given that there are 
$3f+1=4$ vertices in $S$, the probability that at least $f+1=2$ out of these $4$ vertices point to the leader is calculated as follows:
$1-\left((\frac{1}{4})^4 + (\frac{3}{4})^1*(\frac{1}{4})^3*4\right)=243/256=0.94$.

To complete the picture, for $k=5$,
the probability that at least $2$ out of the 5($4f+1$) vertices in $S$ to point to the leader:
$1-\left((\frac{1}{5})^5 + (\frac{4}{5})^1*(\frac{1}{5})^4*5\right) = 3104/3125 = 0.99$.
%\todo{Maybe in the appendix?}

\section{Asynchronous Bullshark is Safe when \texorpdfstring{$k\geq 3$}{k >= 3}}
\label{app:asynch-bullshark-safe}

The following claims closely mirror claims 2, 3, and 4 in~\cite{Bullshark}, but generalized for $k\geq 3$.

\begin{claim}\label{claim:one_type_commit}
    If a validator $p_i$ directly commits a steady-state leader in wave $w$, no other honest validator commits (directly or indirectly) a fallback leader in $w$, and vice versa. %the same goes the other way around.
\end{claim}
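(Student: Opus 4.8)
The plan is to reduce the claim to a global counting argument about voting types, exploiting that a validator's voting type for wave $w$ is fixed for the whole wave and — thanks to the no-equivocation guarantee of reliable broadcast — identical in every honest validator's DAG. Hence the number of validators of each type in wave $w$ is a quantity on which all honest validators implicitly agree, and a direct commit of one type's leader pins this quantity down tightly enough to block the other type, but only with the slack that opens up when $k \ge 3$. The structure mirrors Claims 2--4 of the original Bullshark paper, with the inequalities instantiated for general $k$.

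First I would observe that a direct commit of a steady-state leader of $w$ by $p_i$ always exhibits $(k-1)f+1$ vertices of a single round of $w$ (round $2$ if it is the first steady-state leader, round $4$ if it is the second) that carry steady-state voting type and reach that leader. Since no validator issues two vertices for the same round, these vertices belong to $(k-1)f+1$ distinct validators, each of which therefore has steady-state voting type in wave $w$ — and that type is the same in every honest DAG. Consequently at most $kf+1-\bigl((k-1)f+1\bigr)=f$ validators have fallback voting type in $w$, so in \emph{every} honest DAG at most $f$ round-$4$ vertices of $w$ carry fallback type.

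I would then feed this bound into the two rules that could commit the fallback leader of $w$. For the direct rule, committing it needs $(k-1)f+1$ fallback-type vertices reaching it, which is impossible since $(k-1)f+1>f$ for all $k\ge 2$. For the indirect rule (item~\ref{indirect_decision}), an honest $p_j$ would need the fallback leader to collect at least $(k-2)f+1$ fallback votes; but it can collect at most $f$ votes in total, and $(k-2)f+1>f$ exactly when $k\ge 3$, so this is impossible as well. The reverse direction — a direct commit of a fallback leader of $w$ forbids any commit of a steady-state leader of $w$ — is symmetric: it forces $\ge (k-1)f+1$ fallback validators, hence $\le f$ steady-state validators in $w$, and the same two inequalities rule out both the round-$2$ direct commit of the first steady-state leader and the round-$4$ direct and indirect commits of the second.

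The main obstacle I expect is bookkeeping rather than depth: one must be precise about which round supplies the votes for each of the three leaders, and must justify that a single validator cannot contribute, say, a round-$2$ steady-state vote and a round-$4$ fallback vote within the same wave — since that is exactly what lets the two type counts sum to at most $kf+1$. A secondary point is to make the use of $k\ge 3$ visible and confined to the indirect-rule step, where $(k-2)f+1>f$ fails at $k=2$; this is consistent with, and explains, the $k=2$ safety violation depicted in Figure~\ref{fig:BS_counter}.
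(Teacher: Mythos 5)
Your proposal is correct and follows essentially the same route as the paper's proof: a direct commit of one type pins down $(k-1)f+1$ validators of that voting type, types are globally consistent because they are determined by the (equivocation-free) causal history of the round-1 vertex, and hence at most $f$ validators of the opposite type remain, which is too few for either the direct threshold $(k-1)f+1$ or the indirect threshold $(k-2)f+1$. You are in fact somewhat more explicit than the paper, which does not spell out the indirect-rule inequality $(k-2)f+1>f$ or note that this is precisely where $k\geq 3$ is needed.
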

    
\begin{proof}
If $p_i$ directly committed a steady-state leader in wave $w$, it determined $(k-1)f+1$ voters as having a steady-state type in $w$.

    Given that there are $kf+1$ validators in total, and that a validator's voting type within a wave remains consistent for all validators (since it is determined by its causal history upon entering $w$), any other validator $p_j$ will identify at most $f$ validators as fallback voters in $w$. Therefore, no fallback leader can be committed by any validator in wave $w$.
\end{proof}

For the next claim, we say that a validator $p_i$ consecutively directly commits two leaders $v_i$ of round $r_i$ and $v_i'$ of round $r_i' > r_i$, if it directly commits them, but does not directly commit any leader between $r_i$ and $r_i'$ .

\begin{claim}\label{claim:commit_indirectly}
    Assume that an honest validator $p_i$ consecutively directly committed $v_i$ and $v_i'$ in rounds $r_i$ and $r_i'$ respectively.
    Suppose an honest validator $p_j$ committed a leader $v_j$ of round $r_j$ such that $r_i \leq r_j \leq r_i'$. 
    In this case, $p_i$ will indirectly commit $v_j$.
\end{claim}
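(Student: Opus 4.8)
The plan is to follow the proof of the corresponding statement (Claim~3) in~\cite{Bullshark} and re-derive the constants for $n=kf+1$, $k\geq 3$: every quorum of size $2f+1$ there becomes a quorum of size $(k-1)f+1$, the recovery threshold $f+1$ becomes $(k-2)f+1$, and the ``other-type'' cap stays at $f = kf+1-((k-1)f+1)$. For a leader $v$ in round $r$ I write $\delta_v = 1$ if $v$ is a steady-state leader and $\delta_v = 3$ if it is a fallback leader, so that its votes sit in round $r+\delta_v$. Throughout I use the generalized forward-propagation fact (the Bullshark analogue of Lemma~\ref{lem:tusk_quorum}): if some $(k-1)f+1$ vertices of a round have paths to a vertex $x$, then every vertex of every later round has a path to $x$ --- this needs only $2\big((k-1)f+1\big) > kf+1$, which holds for all $k\geq 2$.

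\textbf{Step 1 (the candidate is examined).} First I would show that $v_i'$ has a path to $v_j$, so that when $p_i$ runs its backward traversal from the directly committed $v_i'$ it reaches round $r_j$ and evaluates $v_j$ (together with the opposite-type leader of that round, if $r_j$ is a wave's first round). Since $v_j$ was committed, there is a set $Q$ of $(k-1)f+1$ vertices in round $r_j+\delta_{v_j}$, all of $v_j$'s voting type and all with a path to $v_j$. A short case check --- using Claim~\ref{claim:one_type_commit} to rule out the degenerate line-ups in which a directly committed leader $v_i'$ lies inside the wave of an opposite-type $v_j$, and the trivial case $v_j=v_i'$ --- shows that whenever the statement is non-vacuous we have $r_j+\delta_{v_j} < r_i'$. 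Forward-propagation applied to $Q$ then gives $v_i'$, which sits in a strictly later round, a path to $v_j$.

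\textbf{Step 2 (the recovery rule fires at $p_i$).} Next I would verify that the condition of Item~\ref{indirect_decision} holds at $p_i$ for $v_j$. By an easy induction, $v_i'$ reaches at least $(k-1)f+1$ vertices of every round strictly below $r_i'$: it has $(k-1)f+1$ edges into round $r_i'-1$, and a union of sets each of size $\geq(k-1)f+1$ still has size $\geq(k-1)f+1$. Intersecting this reachable set at round $r_j+\delta_{v_j}$ with the quorum $Q$ --- two subsets of size $\geq(k-1)f+1$ inside a ground set of size $kf+1$ --- yields at least $(k-2)f+1$ vertices that $p_i$ reaches from $v_i'$, each of $v_j$'s type and each with a path to $v_j$; these are exactly $(k-2)f+1$ actual votes for $v_j$. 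For the opposite-type rival leader, when present, a direct commit of $v_j$'s type forces at most $f$ validators to carry the opposite type in $v_j$'s wave, so $p_i$ sees at most $f$ such votes; the rule also zeroes fallback votes once a steady-state leader of the wave has been committed. Hence one leader has $\geq(k-2)f+1$ votes and the other at most $f$, so $p_i$ indirectly commits $v_j$ and continues its traversal, as claimed.

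\textbf{The main obstacle.} The step I expect to cost the most work is the sub-case where $p_j$ obtained $v_j$ by an \emph{indirect} commit rather than a direct one, since then there is no direct-commit quorum for $v_j$ to play the role of $Q$ and no immediate bound on the opposite-type count. There I would trace back to the leader $v^{*}$ that $p_j$ directly committed and whose traversal recovered $v_j$, and reconstruct both the set $Q$ and the ``at most $f$ of the other type'' bound from $v^{*}$'s own direct-commit quorum; making this precise --- in particular arguing that everything $p_j$ saw through $v^{*}$ is also reachable by $p_i$ from $v_i'$ --- is exactly where the Bullshark argument is delicate, and my generalization would mirror it line for line with the constants above.
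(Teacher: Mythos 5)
Your Steps 1--2 correctly reproduce the paper's argument for the sub-case in which $p_j$ \emph{directly} committed the leader in question: the quorum $Q$ of $(k-1)f+1$ same-type voters, the forward-propagation fact, the intersection yielding at least $(k-2)f+1$ actual votes, and the bound of at most $f$ opposite-type votes via Claim~\ref{claim:one_type_commit} all match. The genuine gap is exactly where you flag it: the sub-case where $p_j$ committed $v_j$ only \emph{indirectly}. Your plan --- trace back to the direct commit $v^{*}$ of $p_j$ that recovered $v_j$ and ``reconstruct $Q$'' from $v^{*}$'s quorum --- does not go through, for two reasons. First, an indirectly committed leader is supported in $p_j$'s view only by the $(k-2)f+1$ votes required by Item~\ref{indirect_decision}, not by a $(k-1)f+1$ quorum; intersecting again with the $\geq (k-1)f+1$ potential votes that $p_i$ reaches loses another $f$ and leaves only $(k-3)f+1$, which is below the $(k-2)f+1$ threshold $p_i$ needs (for $k=3$ it is just $1$). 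Second, your hoped-for statement that ``everything $p_j$ saw through $v^{*}$ is also reachable by $p_i$ from $v_i'$'' can be outright false, since $v^{*}$ may sit in a round strictly above $r_i'$ and hence outside $v_i'$'s causal history.

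The paper avoids both problems by first reducing to $v_j'$, the leader of the \emph{highest} round $r_j'$ that $p_j$ committed in $[r_i,r_i']$ (committing $v_j'$ suffices, because $v_j'$ has the same causal history in $DAG_i$ and $DAG_j$, so $p_i$'s backward traversal from $v_j'$ recovers $v_j$ exactly as $p_j$'s did). If $v_j'$ was directly committed, your Steps 1--2 apply. If it was only indirectly committed, maximality of $r_j'$ forces the pair of consecutive direct commits of $p_j$ enclosing $r_j'$ to straddle $r_i'$; applying the direct sub-case with the roles of $p_i$ and $p_j$ swapped shows that $p_j$ indirectly commits $v_i'$, and maximality then forces $v_j'=v_i'$, which $p_i$ has already committed. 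No reconstruction of a quorum for an indirectly committed leader is ever needed, and you should restructure your hard case around this maximality-plus-role-reversal argument rather than around $v^{*}$.
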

\begin{proof}
    Let $v_j'$ in round $r_j'$ be the ``last'' leader committed by $p_j$ between $r_i$ and $r_i'$, i.e., $r_j'$ is the highest round of a committed leader by $p_j$, s.t. $r_i \leq r_j' \leq r_i'$.
    If we show that $p_i$ indirectly commits $v_j'$, it then follows that $p_i$ also indirectly commits $v_j$ ($v_j'$ has the same causal history in both $DAG_i$ and $DAG_j$), thus completing the proof.
    
    We will show that $v_j'$ will be indirectly committed by $p_i$ when traversing backwards from the commit of $v_i'$.     
    Consider two cases:
    \begin{enumerate}    
        \item $v_j'$ was directly committed by $p_j$.
        Let $r$ be the smallest round between $r_j'$ and $r_i'$ in which $p_i$ committed (directly or indirectly) a leader $v$.
        Examine two scenarios:
        \begin{itemize}
            \item $v_j'$ is a steady-state leader. 
            $ r> r_j'+1$ since leaders are considered in odd rounds only. Since $p_j$ directly committed $v_j'$, there is a set of 
            $(k-1)f+1$ vertices in $DAG_j[r_j'+1]$ with paths to $v_j'$ and with steady-state type.
            Since vertices' types (in a wave) are determined consistently for $p_i$ and $p_j$, and by quorum intersection, there are at least $(k-2)f+1$ vertices in $DAG_i[r_j'+1]$ with steady state type and paths from $v$ to them.
            \item $v_j'$ is a fallback leader. By claim \ref{claim:one_type_commit},
            no leader is committed in round $r_j'+2$, thus $r> r_j'+3$. 
            Since $p_j$ directly committed $v_j'$, there is a set of 
            $(k-1)f+1$ vertices in $DAG_j[r_j'+3]$ with paths to $v_j'$ and with fallback type.
            By quorum intersection, and since vertices' types (in a wave) are determined consistently for $p_i$ and $p_j$, there are at least $(k-2)f+1$ vertices in $DAG_i[r_j'+3]$ with $v_j'$ vote type (fallback) and paths from $v$ to them.
        \end{itemize}
        In both scenarios, $p_i$ will count at least $(k-2)f+1$ votes for the leader $v_j'$, and less than $f$ for the opposite type. Given that  $k\geq3$, we have $(k-2)f+1 > f$, leading to an unambiguous decision.
        Therefore $p_i$ indirectly commits~$p_j'$.
        \label{safety_consecutive}
        \item 
        $v_j'$ was indirectly committed by $p_j$. Consider the two consecutive direct commits of $p_j$ enclosing round $r_j'$. With the roles reversed, and according to the previous case, $p_j$ will indirectly commit $v_i'$, resulting in $v_j' == v_i'$, thereby completing the proof.
    \end{enumerate}    \end{proof}

We establish a total order by applying claim~\ref{claim:commit_indirectly} for any two~validators.

\section{Asynchronous Bullshark Liveness with \texorpdfstring{$k\geq 3$}{k >= 3}}
\label{app:asynch-bullshark-live}
We show that for every round $r$, there is a round $r'>r$, such that an honest validator will commit a leader in $r'$ with probability 1.
We follow claims 5, 6, 7 and 8 from~\cite{Bullshark}, and generalize them for~$k\geq3$:

\begin{claim}
\label{claim:allfb}
Consider a wave $w$ such that for any following 
wave no honest validator has committed a leader.
Then in every wave $w'>w+1$ every honest
validator $p_i$ will determine the voting type of all validators in $DAG_i$ as fallback.
\end{claim}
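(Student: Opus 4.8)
The plan is to argue by contradiction, exploiting the fact that the voting‑type verdict for a validator in a given wave is a deterministic function of the causal history of that validator's round‑$1$ vertex, together with the agreement property of reliable broadcast. Fix an arbitrary wave $w' > w+1$, an arbitrary honest validator $p_i$, and an arbitrary validator $p_j$ whose round‑$1$ vertex $v$ of wave $w'$ has been added to $DAG_i$ (if no such vertex exists, there is nothing to determine). Recall that $p_i$ labels $p_j$ as steady‑state for wave $w'$ exactly when the vertices that $v$ points to — which lie in round $4$ of wave $w'-1$ — contain $(k-1)f+1$ vertices that have the appropriate uniform voting type and paths to a leader $\ell$ of wave $w'-1$ (the fallback leader or the second steady‑state leader; the two sub‑cases are symmetric). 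I would assume, toward a contradiction, that this holds.

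The key step is to propagate this to an actual commit by an honest validator. Since $p_i$ added $v$ to its DAG, the agreement property of reliable broadcast forces every honest validator to eventually deliver $v$, and hence — by the rule that a vertex is inserted only once all of its causal history is present — all of $v$'s causal history as well. When an honest validator $p_m$ delivers $v$, it runs the round‑$1$ commit check on precisely the set of vertices $v$ points to; because voting types within a wave are assigned identically by all validators (as established earlier from the causal‑history determination) and the path relation is a fixed structural property of the DAG, $p_m$'s check returns the same positive verdict that $p_i$ observed. Thus $p_m$ commits $\ell$ (if it has not already committed it for some other reason), so some honest validator commits a leader in wave $w'-1$. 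But $w' > w+1$ gives $w'-1 \geq w+1 > w$, so wave $w'-1$ is a wave following $w$, contradicting the hypothesis. Hence $p_i$ labels $p_j$ as fallback for wave $w'$; since $p_i$, $p_j$, and $w'$ were arbitrary, the claim follows.

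I expect the main obstacle to be the bookkeeping in the Byzantine case: a Byzantine $p_j$ never literally ``commits'' anything, so one cannot argue directly about $p_j$. The fix is the observation above — that $p_i$'s steady‑state verdict is dictated entirely by $v$'s immutable, fully visible causal history, which every honest deliverer of $v$ inspects and acts on in the same way. The honest case, where $p_j$ itself is the honest validator forced to commit $\ell$ upon adding its own vertex $v$, is then just a special instance of the same argument. The only other subtlety to handle carefully is the explicit appeal to the consistency of within‑wave voting types across validators, since that is exactly what makes $p_m$'s commit check coincide with the one $p_i$ ran.
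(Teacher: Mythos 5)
Your proof is correct and is essentially the paper's argument in contrapositive form: the steady-state verdict for $p_j$ in wave $w'$ is exactly the commit check on $v$'s causal history succeeding, which would make an honest validator commit a leader in wave $w'-1>w$, contradicting the hypothesis. The detour through a third validator $p_m$ via reliable broadcast is unnecessary --- $p_i$ itself is honest and is the validator whose check passes, hence the one that would commit --- but it does no harm.
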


\begin{proof}
By the claim's assumption, there is no commit starting from wave $w+1$.
When $p_i$ adds a vertex $v$ of round$(w',1)$ by a validator $p_j$ to $DAG_i$, it determines $p_j$'s voting type for $w'$.
It examines whether $p_j$ committed in wave $w'-1$.
According to the assumption $p_j$ did not commit any leader in $w'-1$, so its voting type is set as fallback.
Furthermore, all validators that add round$(w',1)$ vertex from $p_j$ will add the same vertex $v$ (and thus will see the same causal history) and determine $p_j$'s type as~fallback.
\end{proof}

Next, we use Lemma~2 that specifies the common-core property, and the fact that fallback leaders are chosen in~retrospect.

\begin{claim}
\label{claim:probcommit}
If a validator $p_i$ determines that the type of all validators in $DAG_i$ for a wave $w$ are fallback, then the probability of $p_i$ to commit the fallback vertex leader of $w$ is at least~$\frac{k-1}{k} > \frac{2}{3}$ for $k\geq3$.
\end{claim}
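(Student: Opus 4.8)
The plan is to reproduce, in the Bullshark setting, the same common-core style argument used for DAG-Rider liveness, now instantiated for the fallback commit rule. Since $p_i$ completes wave $w$, I would invoke Lemma~\ref{lem:commonCore} for wave $w$, but in the slightly stronger form that its proof actually gives: there is a set $V \subseteq DAG_i[\textit{round}(w,1)]$ with $|V| \ge (k-1)f+1$ such that \emph{every} vertex of $DAG_i[\textit{round}(w,4)]$ has a path to every vertex of $V$. This strengthening is immediate from the proof of Lemma~\ref{lem:commonCore}: fewer than $f$ round-$3$ vertices fail to point to the selected round-$2$ vertex $v_l$, and any round-$4$ vertex has $(k-1)f+1 > f$ edges into round $3$, so it reaches $v_l$, hence reaches all of the $(k-1)f+1$ round-$1$ vertices that $v_l$ points to.

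Next I would bring in the hypothesis of the claim. All validators have fallback voting type for wave $w$ in $DAG_i$, and since a vertex inherits the (consistently determined) voting type of its source validator for the wave, every vertex of $DAG_i[\textit{round}(w,4)]$ is a fallback voter. Now consider the round$(w,4)$ vertices referenced by $p_i$'s own vertex of $\textit{round}(w{+}1,1)$, which $p_i$ eventually creates once the relevant timeouts expire: there are at least $(k-1)f+1$ of them, they are all fallback voters, and — provided the $\textit{round}(w,1)$ vertex of the coin-elected fallback leader $p_\ell$ lies in $V$ — every one of them has a path to that leader vertex. In that event the direct-commit rule for the fallback leader of wave $w$ (at least $(k-1)f+1$ of $p_i$'s potential votes of fallback type with paths to the leader) is satisfied, so $p_i$ commits the fallback leader of $w$.

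It then remains to lower-bound $\Pr[p_\ell\text{'s } \textit{round}(w,1)\text{ vertex} \in V]$. Here I would use that the shared coin for wave $w$ is tossed only after wave $w$ is fully formed, so $V$ is already fixed and, by the unpredictability property, the adversary cannot have steered the DAG so as to keep the eventual leader out of $V$; by the fairness property the coin returns one of the $\ge (k-1)f+1$ source validators of $V$ with probability at least $\frac{(k-1)f+1}{kf+1}$. A one-line computation gives $\frac{(k-1)f+1}{kf+1} \ge \frac{k-1}{k}$, which for $k\ge 3$ exceeds $\frac23$, yielding the claim.

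The main obstacle, as in the original Bullshark argument, is aligning the \emph{local} common-core guarantee obtained when $p_i$ finishes wave $w$ with the \emph{later} direct-commit check, which inspects precisely the round-$4$ vertices that $p_i$'s $\textit{round}(w{+}1,1)$ vertex happens to reference — a set we do not get to choose. This is why I would rely on the ``every round-$4$ vertex of $DAG_i$ reaches $V$'' form of Lemma~\ref{lem:commonCore} rather than its bare statement (which only promises \emph{some} set $U$ of size $(k-1)f+1$): since $p_i$'s referenced round-$4$ vertices and their round-$3$ neighbours are already in $DAG_i$ when $p_i$ completes the wave, they fall under the strengthened conclusion. The two remaining ingredients — cross-validator consistency of voting types within a wave, and independence of the coin from the DAG — are exactly those already used in the asynchronous Bullshark safety argument (Claim~\ref{claim:one_type_commit}) and in DAG-Rider's liveness proof, and transfer with only cosmetic changes.
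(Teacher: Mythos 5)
Your proof is correct and follows essentially the same route as the paper's: all round-$4$ voters are fallback by hypothesis, Lemma~\ref{lem:commonCore} supplies a set $V$ of $(k-1)f+1$ round-$1$ vertices reachable from round $4$, and the late coin toss makes the leader land in $V$ with probability at least $\frac{(k-1)f+1}{kf+1}\geq\frac{k-1}{k}$. The one place you go beyond the paper is worthwhile: the paper invokes only the bare existential form of Lemma~\ref{lem:commonCore} (some set $U$ of round-$4$ vertices reaches $V$) and silently identifies $U$ with the potential votes actually referenced by $p_i$'s round$(w{+}1,1)$ vertex, whereas your strengthened reading of the lemma's proof (\emph{every} round-$4$ vertex reaches $V$, since at most $f$ round-$3$ vertices miss $v_l$ and each round-$4$ vertex has $(k-1)f+1>f$ edges into round $3$) closes that gap explicitly.
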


\begin{proof}

By the assumption, the vote type of all validators with vertices in $DAG_i[\textit{round}(w,4)]$ is fallback.
By Lemma~2, there exist two sets, 
$U$ and $V$, each of size $(k-1)f+1$, such that, $U\subseteq DAG_i[\textit{round}(w,4)]$ and
$ V \subseteq DAG_i[\textit{round}(w,1)]$,
where each vertex in $U$ has a path to each vertex in $V$.
If any of the vertices in $V$ is chosen to be the leader, $p_i$ will directly commit it. 
Since the fallback leader is elected only in round $4$, it is too late for the adversary to control who is pointing to the leader.
Thus, the probability for the elected leader to be in $V$ is at least $\frac{(k-1)f+1}{kf+1} > \frac{2}{3}$ for $k\geq3$.
\end{proof}

\begin{claim}
For every wave $w$, there is an honest validator that commits a leader in a wave higher than $w$ with probability 1.
\end{claim}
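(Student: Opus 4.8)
The plan is to show that for every wave $w$ the event $A$ that no honest validator ever commits a leader in a wave greater than $w$ has probability $0$; this is equivalent to the claim. Write $E_j$, for $j\ge0$, for the event that no honest validator commits a leader in any of the waves $w+1,\dots,w+1+j$, so that $A=\bigcap_{j\ge0}E_j$ and $\Pr[A]\le\Pr[E_j]$ for every $j$. First I would observe that on $E_j$ the hypothesis of Claim~\ref{claim:allfb} is met up through wave $w+1+j$, and since a validator's voting type in a wave $w'$ depends only on whether it committed in wave $w'-1$, the reasoning of Claim~\ref{claim:allfb} gives that in every wave $w'$ with $w+2\le w'\le w+2+j$ every honest validator assigns the fallback type to all vertices in its DAG.

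Next I would invoke Claim~\ref{claim:probcommit} on the wave $w'=w+2+j$: conditioned on $E_j$, any fixed honest validator $p_i$ directly commits the fallback leader of $w'$ with probability at least $\frac{(k-1)f+1}{kf+1}\ge\frac{k-1}{k}$ for $k\ge3$, because the common-core set $V\subseteq DAG_i[\textit{round}(w',1)]$ of Lemma~\ref{lem:commonCore} is frozen before the wave-$w'$ coin toss and, by the coin's fairness and unpredictability, the wave-$w'$ leader lands in $V$ with that probability. Hence $\Pr[E_{j+1}\mid E_j]\le\Pr[\text{no honest commit in wave }w+2+j\mid E_j]\le\frac{f}{kf+1}<\frac1k$, so by induction $\Pr[E_j]\le(1/k)^{j}\to0$, and therefore $\Pr[A]=0$; equivalently, with probability $1$ some honest validator commits a leader in a wave greater than $w$.

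I expect the main obstacle to be making the conditioning rigorous rather than the combinatorics. One must argue that the adversary --- who controls message scheduling and Byzantine behavior --- cannot correlate its choices with the outcome of the wave-$w'$ coin, so that the bound $\Pr[\text{commit in }w']\ge\frac{k-1}{k}$ holds conditionally on the entire past of the execution (in particular on $E_j$, which is determined by the wave-$w'$ DAG structure and hence settled before the coin is revealed) and not merely marginally. This is exactly where the shared-coin unpredictability property (fewer than $f+1$ honest calls reveal essentially nothing about the outcome) and the design choice that the fallback leader is elected only in round~$4$, after the wave's DAG structure is already fixed, do the work; one also needs the standard fact that, the protocol being asynchronous with per-round timeouts, reliable-broadcast validity together with the round-advancement rules force every honest validator to complete every wave, so that Lemma~\ref{lem:commonCore} and Claim~\ref{claim:probcommit} are applicable in every wave.
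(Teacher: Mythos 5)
Your proposal is correct and follows essentially the same route as the paper: assume no commit after wave $w$, invoke Claim~\ref{claim:allfb} to force all-fallback voting types, and then apply Claim~\ref{claim:probcommit} to get a per-wave commit probability of at least $\frac{k-1}{k}$, which forces a commit with probability~1. The paper states this in three informal sentences, whereas you make the limiting argument explicit via the events $E_j$ and the geometric bound $\Pr[E_j]\le(1/k)^j$ and flag the conditioning subtlety; this is a more careful write-up of the same proof rather than a different one.
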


\begin{proof}
Assume that for all waves higher than $w$, no leader was committed. By claim \ref{claim:allfb}, a validator $p_i$ will set the type of all validators of all waves $w'>w+1$ to be fallback.
According to claim~\ref{claim:probcommit}, $p_i$ has a probability of $\frac{k-1}{k}$ to commit the fallback leader in all such waves.
Hence, there is a wave higher than $w$ in which $p_i$ will commit with probability~1. 
\end{proof}

\section{partially synchronous Bullshark safety with \texorpdfstring{$k\geq2$}{k >=2}}
\label{app:ps-bullshark}

\psbullsharksafety*
\begin{proof}
$p_i$ commits a (leader) vertex $v$ in round $r$ if there are at least $f+1$ vertices in round $r+1$ with edges to $v$. 
Since every vertex has at least $(k-1)f+1$ edges to vertices in the previous round, we get by quorum intersection (for every $k\geq 2$, and given that equivocation is impossible) that every vertex in round $r+2$ has a path to $v$. 
Therefore, by induction, we can show that every vertex in every round higher than $r+2$, including $v'$ (leaders are considered in odd rounds only), has a path to $v$.   
\end{proof}

\section{Equivocation Elimination}
\label{app:no-equivocation}

\subsection{On the Use of Trusted Execution Environments}

A na\"ive use of trusted execution environments (TEE) simply runs the entire code of a crash fault tolerant consensus based protocol inside the TEE to obtain a Byzantine fault-tolerant protocol with $n=2f+1$~\cite{VSCBLV13}.
Alas, most existing TEEs execute code much slower than the typical speed outside the TEE.
More importantly, long and complex code is likely to include bugs and vulnerabilities.
Hence, executing an entire consensus based protocol inside the TEE runs a higher risk of being compromised, in which case a single corrupted node could bring down the entire crash resilient protocol.

To that end, multiple TEE assisted BFT protocols that focus on minimal use of the TEE and only to the degree required to eliminate equivocation have been developed~\cite{HybridsSteroids,A2M,DAMYSUS,CheapBFT,Trinc,YANR22}.
Most of these protocols work with $n=2f+1$.
Yet, recently there is a debate in the community about the pros and cons of requiring $n=3f+1$ even when equivocation is eliminated through judicious use of TEEs~\cite{vivisectingdissection,dissecting-bft}.

\subsection{TEE-less Equivocation Removal}

As a minor side contribution, we now show a simple TEE-less augmentation of $3f+1$ reliable Byzantine broadcast that eliminates equivocation, while enabling the rest of the DAG protocol to utilize only $n=2f+1$ validators, whenever the DAG protocol works correctly with $2n+1$ validators once equivocation is eliminated.
As shown before, DAG-Rider is an example of such a protocol.

Specifically, assume we have a total of $3f+1$ nodes.
We divide them into $2f+1$ validators plus $f$ witness nodes.
Only validators participate in the higher level consensus/blockchain/SMR DAG-based protocol, and only they may issue messages to the reliable Byzantine broadcast protocol.
All $3f+1$ nodes participate in the reliable Byzantine broadcast protocol, but only validators deliver messages to their respective DAG protocol layer.

\begin{figure}[htbp]
    \centering
    \includegraphics[width=0.7\linewidth]{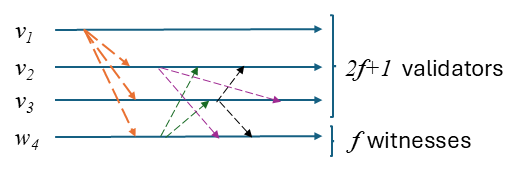}
    \caption{A Simple Signed TEE-less Reliable Broadcast Protocol}
    \label{fig:teeless}
\end{figure}

Figure~\ref{fig:teeless} illustrates such a setting for the simple echo based signed messages reliable broadcast protocol.
The sender first sends a signed copy of its message to all other nodes in the system.
Every node that receives a signed and validated message for the first time, re-sends this message to all other nodes, except the one it came from.
Once a validator receives the same signed and validated message $m$ from $n-f$ nodes, it locally delivers $m$.    

\fi
\fi
\end{document}